\newtheorem{Theorem}{Theorem}
\newtheorem{proposition}[Theorem]{Proposition}
\newtheorem{corollary}[Theorem]{Corollary}
\def\E{{\mathbb{E}}}
\def\diag{{\mbox{diag}}}
\newcommand{\ud}{\,\mathrm{d}}
\newcommand{\evec}{{\bf{e}}}
\newcommand{\yvec}{{\bf{y}}}
\newcommand{\xvec}{{\bf{x}}}
\newcommand{\zvec}{{\bf{z}}}
\newcommand{\svec}{{\bf{s}}}
\newcommand{\gvec}{{\bf{g}}}
\newcommand{\zerovec}{{\bf{0}}}
\newcommand{\Gammamat}{{\bf{\Gamma}}}
\newcommand{\Omegamat}{{\bf{\Omega}}}
\newcommand{\Amat}{{\bf{A}}}
\newcommand{\Bmat}{{\bf{B}}}
\newcommand{\Cmat}{{\bf{C}}}
\newcommand{\Dmat}{{\bf{D}}}
\newcommand{\Jmat}{{\bf{J}}}
\newcommand{\Imat}{{\bf{I}}}
\newcommand{\Wmat}{{\bf{W}}}
\newcommand{\Xmat}{{\bf{X}}}
\newcommand{\Zmat}{{\bf{Z}}}
\newcommand{\define}{\stackrel{\triangle}{=}}
\def\thetavec{{\mbox{\boldmath $\theta$}}}
\def\phivec{{\mbox{\boldmath $\phi$}}}
\def\thetavecsmall{{\mbox{\boldmath {\scriptsize $\theta$}}}}
\def\phivecsmall{{\mbox{\boldmath {\scriptsize $\phi$}}}}
\def\phivecsmall{{\mbox{\boldmath {\scriptsize $\phi$}}}}
\newcommand{\be}{\begin{equation}}
\newcommand{\ee}{\end{equation}}
\newcommand{\beqna}{\begin{eqnarray}}
\newcommand{\eeqna}{\end{eqnarray}}
\title{Graph Signal Processing Meets Blind Source Separation}
\author{{Jari~Miettinen, {\it Member, IEEE}, Eyal Nitzan, {\it Member, IEEE},  Sergiy~A.~Vorobyov, {\it Fellow, IEEE}, and~Esa Ollila, {\it Senior Member, IEEE}}
\thanks{ This paper was presented in part at the IEEE International Conference on Acoustics, Speech, and Signal Processing, Virtual Barcelona, Spain, May 4--8, 2020 \cite{ICASSP2020}. 
	
The authors are with Department of Signal Processing and Acoustics, Aalto University,
P.O. Box 15400, FIN-00076 Aalto, Finland (emails: jari.p.miettinen@aalto.fi, eyal.nitzan@aalto.fi, sergiy.vorobyov@aalto.fi, esa.ollila@aalto.fi).

This research was supported in parts by the Academy of Finland grants No.~299243 and No.~298118.}}
\begin{document}
\maketitle
\begin{abstract}
In graph signal processing (GSP), prior information on the dependencies in the signal is collected in a graph which is then used when processing or analyzing the signal. Blind source separation (BSS) techniques have been developed and analyzed in different domains, but for graph signals the research on BSS is still in its infancy. In this paper, this gap is filled with two contributions. First, a nonparametric BSS method, which is relevant to the GSP framework, is refined, the Cram\'{e}r-Rao bound (CRB) for mixing and unmixing matrix estimators in the case of Gaussian moving average graph signals is derived, and for studying the achievability of the CRB, a new parametric method for BSS of Gaussian moving average graph signals is introduced. Second, we also consider BSS of non-Gaussian graph signals and two methods are proposed. Identifiability conditions show that utilizing both graph structure and non-Gaussianity provides a more robust approach than methods which are based on only either graph dependencies or non-Gaussianity. It is also demonstrated by numerical study that the proposed methods are more efficient in separating non-Gaussian graph signals.
\end{abstract}

\begin{keywords}
	Adjacency matrix, Approximate joint diagonalization, Cram\'{e}r-Rao bound, Graph moving average model, Independent component analysis
\end{keywords}

\IEEEpeerreviewmaketitle

\section{INTRODUCTION}
\label{sec:intro}
In blind source separation (BSS) models, the data matrix is a (usually linear and instantaneous) mixture of latent components, which are assumed to be mutually independent~\cite{ICABook2001,ComonJutten2010}. The purpose of transforming the data matrix into a matrix with independent components is to help in analyzing, learning or visualizing the data. If the data points are independently and identically distributed (i.i.d.), the only way to solve the BSS problem is to use the probability distributions of the components. This branch is known as independent component analysis (ICA), even though sometimes the term ICA is also used as a substitute for BSS. Some well-known ICA estimators are joint approximate diagonalization (JADE)~\cite{CardosoSoloumiac1993}, Infomax~\cite{BellSejnowski1995} and FastICA~\cite{HyvarinenOja1997,Hyvarinen1999}. All these methods require that at most one of the independent components has a Gaussian density. The reason why non-Gaussianity can be used in linear BSS is that a weighted sum of non-Gaussian random variables is more Gaussian than the summands due to the central limit theorem. 

Often the data has some kind of structure, which the ICA methods do not utilize. The structure might arise from serial or spatial dependence, and from tensor or graph data structures. In case of tensor-valued data, the BSS model is different than in other domains and it is important to take the structure into account instead of reducing it to multiple basic BSS models to effectively solve the problem~\cite{Virtaetal2017}. Also in case of time series, the use of temporal dependence has the advantage that Gaussian components can be separated as well. Such methods include second-order blind identification (SOBI)~\cite{Belouchranietal1997}, which uses autocovariance matrices, and color ICA~\cite{Leeetal2011}, which uses Whittle likelihood. Using both non-Gaussianity and temporal dependence have been proposed in~\cite{Shietal2009}.

Recently, the increasing volume of signals with irregular dependencies, coming for example from social, sensor or gene regulatory networks, has lead to the growth of the field of graph signal processing (GSP). There are many signal processing techniques that have been already extended to graph signals, see, e.g., \cite{SandryhailaMoura2013, Shumanetal2013}. In GSP, prior information on the dependence between the data points, which is given as a graph, is used in analyzing the data. One existing method that can be viewed as a BSS method for graph signals (but which was not designed within the GSP framework), called graph decorrelation (GraDe), was proposed in~\cite{Blochletal2010} as a tool for gene expression data analysis~\cite{Blochletal2010, Kowarschetal2010}. It can clearly be useful with some modifications in spatial ICA~\cite{McKeownetal1998} for fMRI data as well. More uses can be found in other biological applications, social networks, sensor networks, etc. Additionally, there is a work under the term graph BSS in the estimation of states and topology of power systems~\cite{Tirza}. However, there the relation to graphs is that the mixing matrix is assumed to be an unknown graph Laplacian matrix, whereas in our BSS model the known graphs give the dependence structure of the source components. To the best of our knowledge, the existing literature of BSS of graph signals is limited to~\cite{Blochletal2010, Kowarschetal2010}. Therefore, a more comprehensive study of BSS of graph signals starting from the basic principles is of significant interest. 

\subsection{Contributions}
The contributions of this paper are the following.\footnote{Our preliminary work on the topic has been reported in \cite{ICASSP2020}.} \\
(i) BSS problem for non-Gaussian graph signals is formulated and identifiability conditions for mixing matrix estimation are presented. \\
(ii) Graph decorrelation procedure is revised by using graph autocorrelation matrices, i.e., normalized graph autocovariance matrices, which are computed for multiple adjacency matrices taking into account the possibility that the graph structures may differ for different independent components (ICs). \\
(iii) Maximum likelihood (ML) approach for separation of graph moving average signals is studied. \\
(iv) Cram\'{e}r-Rao bound (CRB) for vectorized mixing and unmixing matrix estimators are derived in the case of Gaussian graph moving average (GMA) signal sources. The existing results on bounds for mixing and unmixing matrix estimation in the BSS context include the closed-form CRB for the mixing and unmixing matrices derived for the i.i.d case in~\cite{ESA_CRB}, and the CRB for the case when the mixing matrix is specifically a Laplacian matrix~\cite{Tirza}. In addition, Cram\'{e}r-Rao-type bounds on the interference-to-signal ratio for autoregressive moving average time series processes and for BSS of Gaussian sources with general/arbitrary (unspecific) covariance structures were derived in \cite{YEREDOR_CRB} and \cite{YEREDOR_GAUSSIAN_CRB}, respectively. In our case, the covariance is structured by the graph adjacency matrix, which is the major difference to the existing results. \\
(v) Two BSS methods for non-Gaussian graph signals are proposed. The first method jointly diagonalizes graph autocorrelation matrices and fourth-order cumulant matrices. Concerning the second method, a new fixed-point algorithm for joint graph decorrelation and FastICA is derived.

\subsection{Notation} 
We use boldface capital letters for matrices, boldface lowercase letters for vectors, and capital calligraphic letters for sets. The exceptions are $\mathbf{1}_N$ that is the $N$-dimensional vector full of ones and the $M \times N$ matrix full of ones $\mathbf{1}_{M\times N} = \mathbf{1}_M \mathbf{1}_N^\top$. The matrix $\mathbf{I}_N$ is the $N \times N$ identity matrix, $\mathbf{e}_i$ is the $i$th unit vector, and $\mathbf{E}_{P \times P}^{i,j}$ is the $P \times P$ matrix whose $(i,j)$th element is one and the other elements are zeros. The $(i,j)$th element of the matrix $\Amat$ is denoted by $a_{i,j}$. The Kronecker delta is denoted as $\delta(j-l)$.  The notations $(\cdot)^\top$,  $\odot$, $\| \cdot \|$,  $\mbox{tr} ( \cdot )$, $\mbox{det} (\cdot)$, $\mathbb{P} (\cdot)$, $\E \{ \cdot \}$, $\text{diag}(\cdot)$, and $\text{vec}(\cdot)$ stand for the transpose, Hadamard product, Euclidean norm of a vector, trace of a matrix, determinant of a matrix, probability, mathematical expectation, diagonal elements of a matrix (or a diagonal matrix formed from a vector), and a vector of elements of a matrix obtained by stacking the columns one after another, respectively. The matrix inequality $\Amat\succeq\Bmat$ means that $\Amat-\Bmat$ is a positive semidefinite matrix. In addition, ${\mathcal{N}}(\zerovec, \mathbf{C})$ stands for Gaussian distribution of a vector random variable with zero mean and covariance matrix $\mathbf{C}$.

\subsection{Organization}
In Section~\ref{sec:BSS}, we recall the BSS model and the graph decorrelation method to estimate the unmixing matrix in case of graph signal sources, and modify the graph decorrelation method. For Gaussian GMA model, we derive the CRB for the mixing and unmixing matrix estimators in Section~\ref{sec:CRB}. Non-Gaussian graph signals are considered in Section~\ref{sec:NGGBSS}, where we introduce two methods for BSS of graph signals which take advantage of non-Gaussianity as well. Section~\ref{sec:simul} collects the simulation results related to the previous sections, but also includes a study of ML approach in estimating the unmixing matrix under the assumption that the sources come from the Gaussian GMA model. Section~\ref{sec:concl} concludes the paper.

\section{Gaussian Graph Signal BSS}
\label{sec:BSS}
A graph $\mathcal{G} = (\mathcal{N},\mathcal{E})$ consists of a set of $N$ nodes $\mathcal{N}$ and a set of edges $\mathcal{E}$. It can be presented concisely by an $N\times N$ adjacency matrix $\mathbf{W}$ which satisfies the conditions $w_{ii}= 0$ for $i=1,\dots,N$ and $w_{ij} = 1$ if and only if $(i,j) \in \mathcal{E}$, i.e., there is an edge from the $j$th node to the $i$th node. We will assume for simplicity that the considered signal dependency graph is unweighted and undirected. A graph is undirected if $(i,j) \in \mathcal{E}$ implies $(j,i) \in \mathcal{E}$. The adjacency matrix of an undirected graph is symmetric. Most of the results in the paper can be extended to weighted and directed case as well.

\subsection{Graph BSS Problem}
Let us assume that the observed (centered) signal matrix $\mathbf{X} \in \mathbb{R}^{P\times N}$ is a linear mixture of the signals/components of $\mathbf{Z}$, that is, 
\begin{equation}
\label{BSS_model}
\mathbf{X} = \mathbf{\Omega} \mathbf{Z}
\end{equation}
where the mixing matrix $\mathbf{\Omega} \in \mathbb{R}^{P\times P}$ has full rank, and $\mathbf{Z} \define (\boldsymbol z_1, \dots, \boldsymbol z_P)^\top \in \mathbb{R}^{P\times N}$ is a matrix of latent source components. 
It is assumed that the components of $\mathbf{Z}$ are mutually independent, i.e., are ICs. In addition, it is assumed that $\E[\mathbf{z}_p] = 0$, $\E[ \| \mathbf{z}_p \|^2] = N$, and hence, 
$N^{-1} {\rm tr} ( \E[ \mathbf{z}_p \mathbf{z}_p^\top]) = 1$, for $p=1,\dots,P$. These assumptions are needed, because it is a well-known fact that the scales of the source components and columns of the mixing matrix cannot be estimated jointly.\footnote{This can be seen also in the CRB derivations in Section~\ref{sec:CRB}, since if the scales of the sources were unknown, the Fisher information matrix would be singular.}

For example in brain imaging applications, the matrix $\mathbf{Z}$ includes $P$ brain signals of interest, which cannot be directly measured with non-invasive methods. We can only have observations on the mixtures of these signals, given in $\mathbf{X}$. The unknown mixing matrix $\mathbf{\Omega}$ describes how the latent signals are mixed on the way to sensors. See Introduction for more applications of the model.

To ensure identifiability of the unmixing matrix $\mathbf{\Gamma} \define \mathbf{\Omega}^{-1}$ based on the knowledge of $\mathbf{X}$ only, the pairs of components of $\mathbf{Z}$ have to differ by graph dependence properties or at least one of the two must be non-Gaussian.\footnote{More detailed  identifiability conditions will be given when it is possible, that is, for the particular estimators which will be proposed in Section~\ref{sec:NGGBSS}.} We first consider the case when all ICs are Gaussian and differ from each other only by graph dependence properties.

Typically, the first step in BSS is prewhitening, that is,
\begin{align}
\tilde{\mathbf{X}} = \hat{\mathbf{S}}_0^{-1/2}\mathbf{X},
\end{align}
where $\hat{\mathbf{S}}_0=N^{-1}\mathbf{X}\mathbf{X}^\top$ is the sample covariance matrix of $\mathbf{X}$. Since $\mathbf{Z} = \mathbf{U} (\mathbf{\Omega} \mathbf{\Omega}^\top)^{-1/2 }\mathbf{X}$ for an orthogonal matrix $\mathbf{U} \define (\mathbf{u}_1, \dots, \mathbf{u}_P)^\top$ and $\hat{\mathbf{S}}_0$ is an estimate of $\mathbf{\Omega}\mathbf{\Omega}^\top$, we have $\mathbf{Z} \approx \mathbf{U} \tilde{\mathbf{X}}$. The aim after the prewhitening is then to find an estimate $\hat{\mathbf{U}}$ such that $\hat{\mathbf{U}}\tilde{\mathbf{X}}$ has ICs, and the unmixing matrix estimate is $\hat{\mathbf{\Gamma}}=\hat{\mathbf{U}}\hat{\mathbf{S}}_0^{-1/2}$. 

\subsection{GraDe}
When prewhitening is used, BSS estimator can be defined as an optimizer of an objective function $f(\mathbf{U},\tilde{\mathbf{X}})$ with respect to $\mathbf{U}$ under the orthogonality constraint $\mathbf{U} \mathbf{U}^\top = \mathbf{I}_P$. There are also adaptive BSS methods where the choice of objective itself depends on $\tilde{\mathbf{X}}$. If ICs are Gaussian and differ from each other only by graph dependence properties, it is reasonable to choose as an objective function, denoted hereafter as  $f_1 (\mathbf{U}, \tilde{\mathbf{X}})$, the graph decorrelation objective, which uses the prior information on the connections between the elements in each row of $\mathbf{Z}$. In the simplest case, this information is given in the form of a single symmetric $N\times N$ adjacency matrix $\mathbf{W}$, which is used in defining graph autocovariance matrices  
\begin{align}
\label{GraDemat}
\hat{\mathbf{S}}^k (\tilde{\mathbf{X}}, \mathbf{W}) = \frac{1}{N-k} (\tilde{\mathbf{X}} \mathbf{W}^k \tilde{\mathbf{X}}^\top), \ \ k=1,\dots,K.
\end{align}

Then the estimator looks for an orthogonal matrix $\mathbf{U}$ which makes the matrices $\mathbf{U} \hat{\mathbf{S}}^k (\tilde{\mathbf{X}}, \mathbf{W}) \mathbf{U}^\top$ as diagonal as possible. Thus, the objective function to be maximized with respect to $\mathbf{U}$ is $\sum_{k=1}^K\| \diag(\mathbf{U} \hat{\mathbf{S}}^k (\tilde{\mathbf{X}}, \mathbf{W}) \mathbf{U}^\top) \|^2$. Notice that this is equivalent to minimizing the sum of squares of the off-diagonal elements. When $K=1$, the corresponding optimization problem can be solved simply in terms of eigendecomposition, whereas for $K\geq 2$, it becomes the joint approximate diagonalization problem which is computationally more demanding. However, joint diagonalization algorithm~\cite{Clarkson1988} based on Givens rotations, for example, yields a very fast solution. It is used in JADE and many other BSS methods.  

\subsection{Extensions of GraDe}

We propose two modifications/extensions to the graph decorrelation principle described above. First, we allow several adjacency matrices $\mathbf{W}_1,\dots,\mathbf{W}_P$ and their powers ${\cal W} \define \{\mathbf{W}_p^k\}_{p=1,\dots,P,\ k=1,\dots,K}$. This is useful if the ICs have different dependence structures, which can be captured by distinct adjacency matrices. An example of such an application is the spatial ICA for fMRI data, where brain regions are interacting differently in different brain functions. Note that if there are irrelevant adjacency matrices in the set $\{\mathbf{W}_1,\dots,\mathbf{W}_P\}$, i.e., there are matrices for which the graph autocorrelations (see below) are equal for each IC, their influence on $\mathbf{U}$ is quite small. We will illustrate this via simulations in Section~\ref{sec:simul}.

The second modification aims to keep a balance between the weights that different matrices $\mathbf{W}_p^k$ may have in the overall objective function, which can now be written as 
\begin{align}
\label{GraDeobj}
f_1 (\mathbf{U}, \tilde{\mathbf{X}}) \define \sum_{p=1}^P \sum_{k=1}^K\| \diag(\mathbf{U} \tilde{\mathbf{S}}_p^k (\tilde{\mathbf{X}}, \mathbf{W}) \mathbf{U}^\top) \|^2
\end{align} 
where the balance can be ensured in terms of normalization, i.e., by using the normalized matrices 
\begin{equation}
\tilde{\mathbf{S}}_p^k (\tilde{\mathbf{X}}, \mathbf{W}) = \frac{P \tilde{\mathbf{X}} \mathbf{W}_p^k \tilde{\mathbf{X}}^\top}{\|\mathbf{W}_p^k \tilde{\mathbf{X}}^\top\|}
\end{equation}  
instead of the graph autocovariance matrices \eqref{GraDemat}. These normalized matrices can be interpreted as graph autocorrelation matrices (cf. autocorrelation matrix of multivariate time series process), and they are invariant with respect to rescaling $\mathbf{W}_p$. The latter is an important property when using multiple adjacency matrices or when using them as part of a composite objective function \eqref{genObjF}, which will be discussed in Section~\ref{sec:NGGBSS}, because then the same weight can be applied for the graph autocorrelation part irrespective of the number of edges in the graphs. 

It should be noted that no particular graph signal model needs to be assumed for applying the graph decorrelation principle. However, parametric graph signal models are often used in GSP as they lead to desired higher accuracy of GSP methods (if the model is adequate to the graph signals) just as in classical signal processing~\cite{Blochletal2010, Chepuri2017}. Given a parametric model, it is of wide interest to derive the CRB for performance analysis, as will be presented in the following section.

\section{CRB for Gaussian Graph Signal BSS}
\label{sec:CRB}
In this section, we assume a parametric graph signal model and derive the CRB for unbiased estimation of the mixing and unmixing matrices in the mixing model \eqref{BSS_model}. In particular, the sources are assumed to be Gaussian with zero mean and their covariance matrices depend on a corresponding adjacency matrix and on some additional unknown parameters. Formally, we consider $P$ statistically independent sources, $\zvec_p \sim {\mathcal{N}} (\zerovec, \Cmat_p)$, where $\Cmat_p=\Cmat_p(\Wmat_p,\thetavec_p)$ are invertible matrices that depend on the known adjacency matrix, $\Wmat_p$, and on some unknown parameters, $\thetavec_p\in\mathbb{R}^{M_p},~p=1, \ldots, P$.\\
\indent
It can be observed that there are no specific assumptions on the structure of the covariance matrices, $\Cmat_p,~p=1,\ldots,P$. In order to demonstrate the relevance of the CRB to separation of graph signals, we present a common GSP parametric model, which we term hereafter as GMA model. The GMA model has been widely used~\cite{Blochletal2010, Chepuri2017}, and therefore, it is of interest to show the corresponding CRB for this case. A GMA model of order $M$, GMA$(M)$, is written as 
\begin{equation}
\mathbf{z} = \mathbf{y} + \sum_{l=1}^M \theta_l \mathbf{W}^l \mathbf{y} \label{GMA}
\end{equation}
where the elements of the innovation vector $\mathbf{y} \define [ y_1, \dots, y_N ]^\top$ are independent identically distributed (i.i.d.) random variables with variance $\sigma^2$, while the parameters, $\theta_1, \dots, \theta_M$, adjust the dependencies between the nodes in the graph signal. The model \eqref{GMA} is a natural extension of the time series MA model, and it can be seen that in addition to the parameters, $\theta_1, \dots, \theta_M$, there is also the adjacency matrix parameter $\mathbf{W}$. The adjacency matrix is usually assumed to be known, while the parameters, $\theta_1, \dots, \theta_M$, may be also known or unknown. It is worth noting that GMA signals are stationary graph processes~\cite{Girault2015, PerraudinVandergheynst2017, Marquesetal2016}.\\
\indent
We begin the analysis by describing the general parametric BSS model for Gaussian graph signals. Then, in Subsection \ref{subsec:Derivation of the CRB} we derive the CRB for the general case and afterwards we present the CRB for Gaussian GMA signals, which demonstrates the applicability of the bound to graph signals BSS. Let us define $M\define\sum_{p=1}^{P}M_p$, the vector $\thetavec\define[\thetavec_1^\top, \ldots, \thetavec_P^\top]^\top$, the matrices
\be\label{C_m_deriv}
\Dmat_{p,m} \define \frac{\partial\Cmat_p} {\partial\theta_{p,m}},~m=1,\ldots,M_p,~p=1, \ldots, P,
\ee
and the vectors
\be\label{s_m_deriv}
\svec_{p} \define [{\rm{tr}}(\Cmat_p^{-1} \Dmat_{p,1}),\ldots,{\rm{tr}}(\Cmat_p^{-1} \Dmat_{p,M_p})]^\top,~p=1, \ldots, P,
\ee
that will be used for future derivations. In addition, we define
\be\label{kappa_define}
\kappa_{i,j}\define{\rm{tr}}\left(\Cmat_j^{-1}\Cmat_i\right) , \quad i,j = 1, \ldots, P.
\ee
\indent
Let $\xvec\define{\text{vec}}(\Xmat)$ and $\zvec\define{\text{vec}}(\Zmat)$, where $\Xmat$ and $\Zmat$ are defined in \eqref{BSS_model}. Then, the mixing model \eqref{BSS_model} can be written in a vector form as
\be\label{vec_BSS}
\xvec = (\Imat_N \otimes \Omegamat) \zvec.
\ee
Thus, it can be seen that the vector $\xvec$ is a Gaussian random vector, whose distribution is ${\mathcal{N}}(\zerovec, \Cmat_\xvec)$, where
\begin{equation*}
\Cmat_\xvec\define(\Imat_N\otimes\Omegamat)\Cmat_\zvec(\Imat_N\otimes\Omegamat^\top)
\end{equation*}
and
\begin{equation*}
\Cmat_\zvec \define \sum_{p=1}^{P} \Cmat_p \otimes (\evec_p \evec_p^\top)
\end{equation*}
is obtained using the statistical independence of the zero-mean sources. 
The unknown parameter vector in the model from \eqref{vec_BSS} is
\be\label{parameter_vector}
\phivec \define [{\text{vec}}^\top (\Omegamat), \thetavec^\top]^\top \in \mathbb{R}^{P^2 + M}.
\ee
In the BSS context, we are interested in the estimation of ${\text{vec}}(\Omegamat)$, where $\thetavec$ are unknown nuisance parameters. 
It should be noted that some of the derivations in this section are similar, but not identical, to the derivations in \cite{YEREDOR_GAUSSIAN_CRB} for BSS in the case of Gaussian sources with general covariance structures. For completeness the full derivations are given in the paper.
\subsection{Derivation of the CRB}\label{subsec:Derivation of the CRB}
Under the model \eqref{vec_BSS} with the parameter vector from \eqref{parameter_vector}, the Fisher information matrix (FIM), $\Jmat$, is a $(P^2 + M)\times(P^2 + M)$ matrix that can be expressed in a block form as 
\begin{equation*}
\Jmat=\begin{bmatrix}
\Jmat_\Omegamat & \Jmat_{\Omegamat,\thetavecsmall} \\
\Jmat_{\Omegamat,\thetavecsmall}^\top & \Jmat_\thetavecsmall
\end{bmatrix}
\end{equation*}
where $\Jmat_\Omegamat$, $\Jmat_\thetavecsmall$, and $\Jmat_{\Omegamat,\thetavecsmall}$ are the blocks corresponding to ${\text{vec}}(\Omegamat)$, $\thetavec$, and the coupling between ${\text{vec}}(\Omegamat)$ and $\thetavec$, respectively.\\
\indent
In the following theorem, we derive the FIM for estimation of $\phivec$ from \eqref{parameter_vector}. 
\begin{Theorem}\label{Th_FIM}
Under the model \eqref{vec_BSS} with the parameter vector \eqref{parameter_vector}, the blocks of the FIM are given as follows:\\
The matrix $\Jmat_\Omegamat$ is a block matrix, whose $(i,j)$th block is a $P\times P$ matrix given by
\be\label{FIM_compute_Omega_block}
\begin{split}
&\Jmat_\Omegamat^{(i,j)}=\begin{cases}
\Omegamat^{-\top} \left( 2N\evec_i\evec_i^\top+\sum_{\underset{l\neq i}{l=1}}^{P}\kappa_{i,l}\evec_l\evec_l^\top \right) \Omegamat^{-1}, \quad i=j \\
\Omegamat^{-\top}N\evec_j\evec_i^\top\Omegamat^{-1}, \quad i\neq j
\end{cases}, \quad i,j=1,\ldots,P.
\end{split}
\ee
The matrix $\Jmat_\thetavecsmall$ is a block diagonal matrix, whose $p$th diagonal block is an $M_p\times M_p$ matrix denoted by $\Jmat_{\thetavecsmall_p}$, where
\be\label{FIM_compute_theta_sub}
j_{\thetavecsmall_p,i,j}=\frac{1}{2}{\rm{tr}}\left(\Cmat_p^{-1}\Dmat_{p,i}\Cmat_p^{-1}\Dmat_{p,j}\right),~i,j=1,\ldots,M_p.
\ee 
The matrix $\Jmat_{\Omegamat,\thetavecsmall}$ is a block diagonal matrix, whose $p$th diagonal block is a $P\times M_p$ matrix given by
\be\label{FIM_compute_Omega_theta_block}
\begin{split}
\Jmat_{\Omegamat,\thetavecsmall}^{(p)}=\Omegamat^{-\top}\evec_p\svec_p^\top, \quad p = 1, \ldots, P.
\end{split}
\ee
\end{Theorem}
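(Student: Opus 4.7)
Since $\xvec$ is zero-mean Gaussian with covariance $\Cmat_\xvec=\Cmat_\xvec(\phivec)$, the Slepian--Bangs formula applies, and each entry of $\Jmat$ has the form $\tfrac{1}{2}{\rm{tr}}(\Cmat_\xvec^{-1}\partial_i\Cmat_\xvec\,\Cmat_\xvec^{-1}\partial_j\Cmat_\xvec)$. My plan is to differentiate $\Cmat_\xvec$ in a form that exposes the Kronecker representations $\Cmat_\zvec=\sum_p\Cmat_p\otimes\evec_p\evec_p^\top$ and $\Cmat_\xvec^{-1}=(\Imat_N\otimes\Omegamat^{-\top})\Cmat_\zvec^{-1}(\Imat_N\otimes\Omegamat^{-1})$, and then to collect trace contributions block by block using the identity ${\rm{tr}}(\Amat\otimes\Bmat)={\rm{tr}}(\Amat){\rm{tr}}(\Bmat)$ along with the selector relation $\evec_p\evec_p^\top\evec_q\evec_q^\top=\delta_{pq}\evec_p\evec_p^\top$.

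First I would compute the partial derivatives. Differentiation with respect to $\theta_{p,m}$ gives $\partial_{\theta_{p,m}}\Cmat_\xvec=(\Imat_N\otimes\Omegamat)(\Dmat_{p,m}\otimes\evec_p\evec_p^\top)(\Imat_N\otimes\Omegamat^\top)$ directly from \eqref{C_m_deriv}. For $\omega_{r,s}$, the product rule produces two terms, which collapse via $\mathbf{E}_{P\times P}^{r,s}\evec_p\evec_p^\top=\delta_{sp}\evec_r\evec_p^\top$ to a single column-$s$ contribution $\partial_{\omega_{r,s}}\Cmat_\xvec = \Cmat_s\otimes(\evec_r(\Omegamat\evec_s)^\top + (\Omegamat\evec_s)\evec_r^\top)$.

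For the $\Jmat_\thetavecsmall$ block, substituting the derivatives into Slepian--Bangs and applying the cyclic trace property cancels the outer $\Omegamat^{\pm 1}$ factors; what remains is a trace of a product of two Kronecker-structured matrices, and the selector relation forces block-diagonality in $p$, producing \eqref{FIM_compute_theta_sub}. For the cross block $\Jmat_{\Omegamat,\thetavecsmall}^{(p)}$, the same cancellation reduces one of the two trace factors to ${\rm{tr}}(\Cmat_p^{-1}\Dmat_{p,m})=s_{p,m}$, an entry of $\svec_p$, while the other isolates the column $\Omegamat^{-\top}\evec_p$; the contraction $(\Omegamat\evec_s)^\top\Omegamat^{-\top}\evec_p=\delta_{sp}$ then enforces the stated block-diagonal structure and recovers \eqref{FIM_compute_Omega_theta_block}.

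The main obstacle is the $\Jmat_\Omegamat$ block, because each $\partial_{\omega_{r,s}}\Cmat_\xvec$ carries two rank-structured terms, so four cross products arise per FIM entry. I would expand them and repeatedly apply $(\Omegamat\evec_s)^\top\Omegamat^{-\top}\evec_p=\delta_{sp}$ together with ${\rm{tr}}(\Cmat_i^{-1}\Cmat_i)=N$. Two of the four cross products collapse to $N(\Omegamat^{-1})_{j,r_1}(\Omegamat^{-1})_{i,r_2}$, which is the $(r_1,r_2)$ entry of $\Omegamat^{-\top}(N\evec_j\evec_i^\top)\Omegamat^{-1}$. The other two vanish unless $i=j$, in which case they collapse to $\sum_l\kappa_{i,l}(\Omegamat^{-1})_{l,r_1}(\Omegamat^{-1})_{l,r_2}$. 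In the matched case, the $l=i$ summand equals $N(\Omegamat^{-1})_{i,r_1}(\Omegamat^{-1})_{i,r_2}$ since $\kappa_{i,i}=N$, and combines with the first type to produce the $2N\evec_i\evec_i^\top$ coefficient, while the residual $l\neq i$ contributions assemble into $\sum_{l\neq i}\kappa_{i,l}\evec_l\evec_l^\top$; the unmatched case leaves only $N\evec_j\evec_i^\top$, reproducing both cases of \eqref{FIM_compute_Omega_block}. The pitfall here is purely bookkeeping: each of the four cross products must be correctly attributed, and the inner/outer factors of $\Omegamat^{-\top}$ and $\Omegamat^{-1}$ must surround the correct core.
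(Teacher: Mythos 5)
Your proposal is correct and follows essentially the same route as the paper: the Slepian--Bangs formula, the Kronecker-product representations of $\Cmat_\zvec$, $\Cmat_\zvec^{-1}$ and the derivatives of $\Cmat_\xvec$, and the contraction identities $(\Omegamat\evec_s)^\top\Omegamat^{-\top}\evec_p=\delta_{sp}$ and $\kappa_{i,i}=N$ to collapse the cross terms block by block. The only difference is presentational (you fold the two product-rule terms for $\partial\Cmat_\xvec/\partial\omega_{r,s}$ into a single Kronecker product, where the paper keeps them separate and routes the diagonal-block computation through its expectation identity), so the bookkeeping you describe reproduces the paper's argument.
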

\begin{proof}
The proof is given in Appendix  \ref{App_Th_FIM}.	
\end{proof} 
\noindent
Similar results to \eqref{FIM_compute_Omega_block} and \eqref{FIM_compute_Omega_theta_block} have been derived in \cite{YEREDOR_GAUSSIAN_CRB} for estimation of the unmixing matrix in the case of Gaussian sources with general covariance structures.\\
\indent
The CRB for estimation of $\Omegamat$ or equivalently ${\text{vec}}(\Omegamat)$ is denoted by $\textbf{CRB}_{\Omegamat}$. This bound is a $P^2\times P^2$ matrix that can be expressed as a block matrix. The $(i,j)$th block of $\textbf{CRB}_{\Omegamat}$ is a $P\times P$ matrix and is denoted by $(\textbf{CRB}_{\Omegamat})^{(i,j)},~i,j=1,\ldots,P$. In the following proposition, we derive $\textbf{CRB}_{\Omegamat}$.
\begin{proposition}\label{prop_CRB_omega}
Given that the FIM from Theorem \ref{Th_FIM} is an invertible matrix, the $(i,j)$th block of $\textbf{CRB}_{\Omegamat}$ is given by
\be\label{CRB_compute_Omega_block}
\begin{split}
&(\textbf{CRB}_{\Omegamat})^{(i,j)}
=\begin{cases}
\Omegamat \left( \frac{1}{\zeta_i}\evec_i\evec_i^\top + \sum_{\underset{l\neq i}{l=1}}^{P} \frac{\kappa_{l,i}}{\kappa_{i,l} \kappa_{l,i}-N^2}\evec_l\evec_l^\top \right) \Omegamat^{\top}, \quad i=j\\
\Omegamat \left( -\frac{N}{\kappa_{i,j}\kappa_{j,i}-N^2}\evec_j\evec_i^\top \right) \Omegamat^{\top}, \quad i\neq j
\end{cases}, \quad i,j=1,\ldots,P,
\end{split}
\ee 
where
\be\label{zeta_general}
\zeta_p\define 2N-\svec_p^\top\Jmat_{\thetavecsmall_p}^{-1}\svec_p.
\ee
\end{proposition}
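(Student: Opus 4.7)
The plan is to invert the FIM from Theorem~\ref{Th_FIM} by the standard partitioned-matrix inversion formula and to exploit the extreme sparsity of the resulting Schur complement. Since ${\text{vec}}(\Omegamat)$ is the parameter of interest and $\thetavec$ only enters as a nuisance, the bound is given by
\be
\textbf{CRB}_{\Omegamat}=\bigl(\Jmat_{\Omegamat}-\Jmat_{\Omegamat,\thetavecsmall}\Jmat_\thetavecsmall^{-1}\Jmat_{\Omegamat,\thetavecsmall}^{\top}\bigr)^{-1}.
\ee
I would first compute the subtracted term using the block diagonal structures of both $\Jmat_{\Omegamat,\thetavecsmall}$ in \eqref{FIM_compute_Omega_theta_block} and $\Jmat_\thetavecsmall$ defined through \eqref{FIM_compute_theta_sub}: a direct block multiplication shows that $\Jmat_{\Omegamat,\thetavecsmall}\Jmat_\thetavecsmall^{-1}\Jmat_{\Omegamat,\thetavecsmall}^{\top}$ is itself block diagonal, with $(i,i)$ block equal to $(\svec_i^{\top}\Jmat_{\thetavecsmall_i}^{-1}\svec_i)\Omegamat^{-\top}\evec_i\evec_i^{\top}\Omegamat^{-1}$ and all off-diagonal blocks vanishing.

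Subtracting this from $\Jmat_{\Omegamat}$ in \eqref{FIM_compute_Omega_block} changes only the coefficient of $\evec_i\evec_i^{\top}$ inside each $(i,i)$ block, replacing $2N$ by $\zeta_i$ as in \eqref{zeta_general}. Pulling the $\Omegamat^{-\top}$ and $\Omegamat^{-1}$ factors out of each block as Kronecker products, the Schur complement can be written as $(\Imat_P\otimes\Omegamat^{-\top})\tilde{\Hmat}(\Imat_P\otimes\Omegamat^{-1})$, where $\tilde{\Hmat}$ is a $P^2\times P^2$ matrix whose $(i,j)$ block equals $\zeta_i\evec_i\evec_i^{\top}+\sum_{l\neq i}\kappa_{i,l}\evec_l\evec_l^{\top}$ when $i=j$ and $N\evec_j\evec_i^{\top}$ when $i\neq j$. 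Consequently, $\textbf{CRB}_{\Omegamat}=(\Imat_P\otimes\Omegamat)\tilde{\Hmat}^{-1}(\Imat_P\otimes\Omegamat^{\top})$, so the problem reduces to inverting $\tilde{\Hmat}$.

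The key observation is that, once rows and columns of $\tilde{\Hmat}$ are indexed by pairs $(i,a)$ (block index and within-block index), the only nonzero entries are the diagonal values ($\zeta_i$ at position $((i,i),(i,i))$ and $\kappa_{i,a}$ at $((i,a),(i,a))$ for $a\neq i$) and the off-diagonal values $N$ at the ``swapped'' positions $((i,j),(j,i))$ for $i\neq j$. After reordering, $\tilde{\Hmat}$ therefore decomposes as a direct sum of $P$ scalar blocks and $\binom{P}{2}$ symmetric $2\times 2$ blocks
\be
\begin{bmatrix}\kappa_{i,j}&N\\ N&\kappa_{j,i}\end{bmatrix}, \quad i<j,
\ee
one per unordered pair $\{(i,j),(j,i)\}$. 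Elementary scalar and $2\times 2$ inversions produce $1/\zeta_i$ on the singleton positions, $\kappa_{l,i}/(\kappa_{i,l}\kappa_{l,i}-N^2)$ at the remaining diagonal positions, and $-N/(\kappa_{i,j}\kappa_{j,i}-N^2)$ on the coupling positions.

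Reassembling these scalar entries back into $P\times P$ blocks recovers the claimed structure of $\tilde{\Hmat}^{-1}$, and conjugating block-wise by $\Omegamat$ and $\Omegamat^{\top}$ yields \eqref{CRB_compute_Omega_block}. The main obstacle is not any single computation but the careful index bookkeeping required to identify the scalar-plus-$2\times 2$ decomposition of $\tilde{\Hmat}$ after the Kronecker factorization has been performed; once that decomposition is in hand, both the inversion and the reassembly are routine.
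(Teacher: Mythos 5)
Your proposal is correct and follows essentially the same route as the paper's Appendix B: block (Schur-complement) inversion of the FIM, showing the nuisance correction is block diagonal and only replaces $2N$ by $\zeta_i$, and then inverting the resulting matrix blockwise. The only difference is that you spell out the final inversion explicitly via the permutation into scalar and $2\times 2$ blocks (which the paper delegates to cited references), and you omit the paper's Cauchy--Schwarz argument that $\zeta_p\geq 0$ — but that inequality is not needed for the stated formula, only for the later comparison with the known-$\thetavec$ bound.
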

\begin{proof}
The proof is given in Appendix \ref{App_prop_CRB_omega}.
\end{proof}
\noindent
It can be seen from \eqref{kappa_define} that for a pair of ICs $\zvec_i$ and $\zvec_j$ with the same covariance matrix up to a scale, i.e., $\Cmat_i = \rho\Cmat_j,~\rho>0$, we have $\kappa_{i,j}\kappa_{j,i}=N^2$. Thus, the CRB does not exist in accordance with \eqref{CRB_compute_Omega_block}. The reason for this result is that in this case, Gaussian ICs  $\zvec_i$ and $\zvec_j$ cannot be separated \cite{YEREDOR_GAUSSIAN_CRB}.\\
\indent
Let $\textbf{CRB}_{\Imat_P}$ denote $\textbf{CRB}_{\Omegamat}|_{\Omegamat=\Imat_P}$. Then, using \eqref{CRB_compute_Omega_block}, we can write
\be\label{equivariance_Omega}
\textbf{CRB}_{\Omegamat}=(\Imat_P\otimes\Omegamat)\textbf{CRB}_{\Imat_P}(\Imat_P\otimes\Omegamat^\top).
\ee
The result \eqref{equivariance_Omega} demonstrates that the equivariance property holds for our CRB as expected. It is, for example, also the case for the CRBs derived in \cite{ESA_CRB,YEREDOR_CRB,YEREDOR_GAUSSIAN_CRB}. Moreover, it is worth noting that the same equivariance property holds for common BSS algorithms (see e.g. \cite{miettinen2016separation}) as well. In addition, it can be observed that $\textbf{CRB}_{\Imat_P}$ is composed of $\frac{P(P-1)}{2}$ CRBs derived for all combinations of pairs of sources.\\
\indent
If $\thetavec$ is known, the resulting CRB for estimation of $\Omegamat$, denoted by ${\textbf{CRB}}_{\Omegamat}^{(*)}\in\mathbb{R}^{P^2\times P^2}$, is obtained in a similar manner to \eqref{CRB_compute_Omega_block}. The $(i,j)$th block of ${\textbf{CRB}}_{\Omegamat}^{(*)}$ is given by
\be\label{CRB_compute_Omega_block_theta_known}
\begin{split}
&({\textbf{CRB}}_{\Omegamat}^{(*)})^{(i,j)}=(\Jmat_\Omegamat^{-1})^{(i,j)} =\begin{cases}
\Omegamat \left( \frac{1}{2N}\evec_i\evec_i^\top + \sum_{\underset{l\neq i}{l=1}}^{P} \frac{\kappa_{l,i}} {\kappa_{i,l}\kappa_{l,i}-N^2}\evec_l\evec_l^\top \right) \Omegamat^{\top}, \quad i=j \\
\Omegamat \left( -\frac{N}{\kappa_{i,j}\kappa_{j,i}-N^2}\evec_j\evec_i^\top \right) \Omegamat^{\top}, \quad i\neq j
\end{cases}, \\
&\qquad i,j = 1, \ldots, P.
\end{split}
\ee 
It can be verified that $\frac{1}{\zeta_p}\geq\frac{1}{2N},~\forall p=1,\ldots,P$, and therefore, ${\textbf{CRB}}_{\Omegamat}\succeq {\textbf{CRB}}_{\Omegamat}^{(*)}$. This result is due to the coupling in the FIM between ${\text{vec}}(\Omegamat)$ and $\thetavec$. A similar result to \eqref{CRB_compute_Omega_block_theta_known} for estimation of the unmiximg matrix can be found in \cite{YEREDOR_GAUSSIAN_CRB} in the case that there is no coupling between $\thetavec$ and the unmixing matrix.\\
\indent
In the following corollary, we derive the CRB for estimation of the unmixing matrix, $\Gammamat=\Omegamat^{-1}$. We denote by $\textbf{CRB}_{\Gammamat}$ the CRB for an unbiased estimation of the vectorized unmixing matrix, $\Gammamat$. 
\begin{corollary}\label{coro_CRB_gamma}
Given $\textbf{CRB}_{\Omegamat}$ from \eqref{CRB_compute_Omega_block}, the CRB for estimation of $\Gammamat$ is given by
\be\label{CRB_Gammamat} 
\textbf{CRB}_{\Gammamat}=(\Gammamat^{\top} \otimes\Imat_P)\textbf{CRB}_{\Imat_P}(\Gammamat\otimes\Imat_P)
\ee
where $\textbf{CRB}_{\Imat_P}$ is obtained by substituting $\Omegamat=\Imat_P$ in \eqref{CRB_compute_Omega_block}.
\end{corollary}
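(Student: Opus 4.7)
The plan is to derive $\textbf{CRB}_{\Gammamat}$ by transforming $\textbf{CRB}_{\Omegamat}$ under the reparametrization $\Gammamat = \Omegamat^{-1}$, and then to simplify using the equivariance identity \eqref{equivariance_Omega} that has already been established. This is the standard CRB-transformation route whenever one wants the bound on a smooth invertible function of the original parameters, and here it is particularly clean because the Jacobian has a nice Kronecker form.

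First I would compute the Jacobian of $\vecc(\Gammamat)$ with respect to $\vecc(\Omegamat)$. Differentiating $\Gammamat\Omegamat=\Imat_P$ gives the classical identity $d\Gammamat = -\Gammamat\,(d\Omegamat)\,\Gammamat$, and vectorizing with the rule $\vecc(ABC)=(C^\top\otimes A)\vecc(B)$ yields
\be
\frac{\partial\vecc(\Gammamat)}{\partial\vecc(\Omegamat)^\top} \;=\; -\,(\Gammamat^\top\otimes\Gammamat).
\ee
Next I would apply the Jacobian transformation of the CRB, namely
\be
\textbf{CRB}_{\Gammamat} \;=\; (\Gammamat^\top\otimes\Gammamat)\,\textbf{CRB}_{\Omegamat}\,(\Gammamat\otimes\Gammamat^\top),
\ee
where the two sign factors cancel.

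Then I would substitute the equivariance representation \eqref{equivariance_Omega}, $\textbf{CRB}_{\Omegamat}=(\Imat_P\otimes\Omegamat)\textbf{CRB}_{\Imat_P}(\Imat_P\otimes\Omegamat^\top)$, into the previous line and collapse the outer Kronecker products using the mixed-product rule $(\Amat\otimes\Bmat)(\Cmat\otimes\Dmat)=(\Amat\Cmat)\otimes(\Bmat\Dmat)$. The two key simplifications are $(\Gammamat^\top\otimes\Gammamat)(\Imat_P\otimes\Omegamat)=\Gammamat^\top\otimes(\Gammamat\Omegamat)=\Gammamat^\top\otimes\Imat_P$ and, symmetrically, $(\Imat_P\otimes\Omegamat^\top)(\Gammamat\otimes\Gammamat^\top)=\Gammamat\otimes(\Omegamat^\top\Gammamat^\top)=\Gammamat\otimes\Imat_P$, both of which rely only on $\Gammamat\Omegamat=\Imat_P$. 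Combining these gives exactly \eqref{CRB_Gammamat}.

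There is no real obstacle here—the only things to be careful about are the sign in the matrix-inverse differential (which squares away in the sandwich product) and keeping the order of the Kronecker factors straight so that the mixed-product rule is applied correctly. I would also briefly note at the end that the derived expression inherits the same equivariance structure as $\textbf{CRB}_{\Omegamat}$ and that invertibility of the FIM assumed in Proposition~\ref{prop_CRB_omega} is inherited through the invertible Jacobian $-(\Gammamat^\top\otimes\Gammamat)$, so the hypothesis carries over without modification.
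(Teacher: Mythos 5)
Your proposal is correct and follows essentially the same route as the paper: both apply the CRB transformation formula with the Jacobian $\frac{\ud\,\vecc(\Omegamat^{-1})}{\ud\,\vecc(\Omegamat)}=-\Omegamat^{-\top}\otimes\Omegamat^{-1}$ and then invoke the equivariance identity \eqref{equivariance_Omega} to collapse the Kronecker products. The paper is merely slightly more explicit in writing the full Jacobian with respect to $\phivec$ as $-[(\Omegamat^{-\top}\otimes\Imat_P)(\Imat_P\otimes\Omegamat^{-1}),\zerovec]$, the zero block accounting for the nuisance parameters $\thetavec$, which is exactly the step your shortcut of acting directly on $\textbf{CRB}_{\Omegamat}$ implicitly performs.
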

\begin{proof}
The proof is given in Appendix  \ref{App_coro_CRB_gamma}.	
\end{proof}
\noindent
The result in Corollary~\ref{coro_CRB_gamma} demonstrates that the equivariance property also holds for the CRB for unbiased estimation of the unmixing matrix.
\subsection{CRB for Gaussian GMA sources}\label{subsec:CRB for GMA signals} 
In this subsection, we present the CRB from \eqref{CRB_compute_Omega_block} for estimation of the mixing matrix in the case of statistically independent Gaussian GMA sources. For brevity of the presentation, we consider $P$ GMA$(1)$ sources, $\zvec_p = (\Imat_N + \theta_p \Wmat_p) \yvec_p, \quad p = 1, \ldots, P$, where $\yvec_p \sim{\mathcal{N}} (\zerovec, \sigma_p^2 \Imat_N)$. Therefore, $\zvec_p \sim {\mathcal{N}} (\zerovec, \Cmat_p)$, where the covariance matrices are
\be\label{C_m_define}
\Cmat_p\ = \sigma_p^2(\Imat_N+\theta_p \Wmat_p) (\Imat_N+\theta_p\Wmat_p)^\top, \quad p=1, \ldots, P.
\ee
where $M_1=\ldots=M_p=1$.\\
\indent
Apart from the mixing matrix, the unknown parameters in this setting are $[\theta_1,\sigma_1^2,\ldots,\theta_P,\sigma_P^2]$. Unfortunately, such model leads to ambiguity in the scale of the sources resulting in a singular FIM \cite{ESA_CRB,YEREDOR_CRB,SLOCK}. In order to overcome this issue, we assume a parametric model for the variances such that $\sigma_p^2=\sigma_p^2 (\theta_p)$ is a known function of the unknown parameter, $\theta_p,~\forall p = 1, \ldots, P$. Thus, the unknown nuisance parameters are $\thetavec=[\theta_1,\ldots,\theta_P]^\top$.\\
\indent
Adjusting \eqref{C_m_deriv} for the considered setting, we define
\be\label{C_m_deriv_GMA}
\Dmat_p \define \frac{\ud\Cmat_p} {\ud\theta_p}. 
\ee
By substituting \eqref{C_m_define} into \eqref{C_m_deriv_GMA} with $\sigma_p^2=\sigma_p^2 (\theta_p)$, one obtains
\be\label{S_m_new_theta_depend}
\begin{split}
&\Dmat_p = \sigma_p^2 (\theta_p) (\Wmat_p + \Wmat_p^\top + 2\theta_p \Wmat_p \Wmat_p^\top) +\frac{\ud\sigma_p^2 (\theta_p)}{\ud\theta_p}(\Imat_N + \theta_p \Wmat_p) (\Imat_N + \theta_p \Wmat_p)^\top, \; \\
&\qquad  p=1,\ldots, P.
\end{split}
\ee 
In the considered setting, $\svec_{p}$ from \eqref{s_m_deriv} is a scalar given by
\be\label{s_m_deriv_scalar}
s_{p} = {\rm{tr}}(\Cmat_p^{-1} \Dmat_p),~p=1, \ldots, P.
\ee
In addition, $\Jmat_{\thetavecsmall_p}$ whose elements are give in \eqref{FIM_compute_theta_sub} is a scalar given by
\be\label{FIM_compute_theta_sub_scalar}
j_{\theta_p}=\frac{1}{2}{\rm{tr}}\left(\Cmat_p^{-1}\Dmat_{p}\Cmat_p^{-1}\Dmat_{p}\right),~p=1,\ldots,P.
\ee 
Consequently, by substituting \eqref{s_m_deriv_scalar} and \eqref{FIM_compute_theta_sub_scalar} into \eqref{zeta_general}, we obtain
\be\label{zeta_define}
\zeta_p \define 2N - \frac{2{\text{tr}}^2 (\Cmat_p^{-1} \Dmat_p)} {{\text{tr}}(\Cmat_p^{-1} \Dmat_p \Cmat_p^{-1} \Dmat_p)}, \quad p = 1, \ldots, P.
\ee
Then, the CRB for this case is obtained by plugging \eqref{kappa_define}, \eqref{C_m_define}, \eqref{S_m_new_theta_depend}, and \eqref{zeta_define} into \eqref{CRB_compute_Omega_block}.\\
\indent
As an example for the parameterized variances, we consider
\be\label{sigma2_theta_depend}
\sigma_p^2 (\theta_p) = \frac{N}{{\text{tr}} ((\Imat_N + \theta_p \Wmat_p) (\Imat_N + \theta_p \Wmat_p)^\top)}
\ee
such that ${\text{tr}}(\Cmat_p)=N, \quad \forall p=1,\ldots,P$. The variance choice from \eqref{sigma2_theta_depend} implies that the IC, $\zvec_p$, is normalized so that its energy ${\rm{E}}\{ \|\zvec_p\|^2 \} ] =N, \quad \forall p=1,\ldots,P$, as mentioned in Section~\ref{sec:BSS}. Substitution of \eqref{sigma2_theta_depend} into \eqref{S_m_new_theta_depend} yields
\begin{equation*}
\begin{split}
\Dmat_p &= \frac{N{\text{tr}}((\Imat_N + \theta_p \Wmat_p) (\Imat_N + \theta_p \Wmat_p)^\top)}{{\text{tr}}^2 ((\Imat_N + \theta_p \Wmat_p) (\Imat_N + \theta_p \Wmat_p)^\top)} (\Wmat_p + \Wmat_p^\top + 2 \theta_p \Wmat_p \Wmat_p^\top) \\
&\quad -  \frac{N{\text{tr}} (\Wmat_p + \Wmat_p^\top + 2 \theta_p \Wmat_p \Wmat_p^\top)}{{\text{tr}}^2 ((\Imat_N + \theta_p \Wmat_p) (\Imat_N + \theta_p \Wmat_p)^\top)} (\Imat_N + \theta_p \Wmat_p) (\Imat_N + \theta_p \Wmat_p)^\top, \\
& \qquad p=1,\ldots, P.
\end{split}
\end{equation*}

\section{Non-Gaussian graph signal BSS}
\label{sec:NGGBSS}
In this section, we consider BSS of ICs which are non-Gaussian graph signals. For separation of such signals, it is then reasonable to exploit both the differences due to the graph correlations as well as non-Gausianity. In general, the BSS estimator in this case can be defined as an optimizer of the following composite objective function
\begin{align}
\label{genObjF}
f(\mathbf{U},\tilde{\mathbf{X}}) \define \lambda \,f_1 (\mathbf{U}, \tilde{\mathbf{X}}) + (1- \lambda) \, f_2 (\mathbf{U}, \tilde{\mathbf{X}})
\end{align}
where $\lambda\in [0,1]$ is a weight parameter that aims to balance between the two parts of the objective function, among which $f_1 (\mathbf{U}, \tilde{\mathbf{X}})$ focuses on the graph correlations and $f_2 (\mathbf{U}, \tilde{\mathbf{X}})$ focuses on non-Gaussianity. We propose two methods that maximize different particular forms of the objective \eqref{genObjF} over $\mathbf{U}$ subject to the orthogonality constraint $\mathbf{U} \mathbf{U}^\top = \mathbf{I}_P$. The identifiability conditions for the methods are then also discussed.

\subsection{Graph JADE}
\label{sec:GraphJADE}
First, we suggest a method that we call {\it Graph JADE}. the name comes from the fact that the corresponding objective function to be minimized is selected to be
\begin{align}
\label{GraphJADE}
f_{\rm GraphJADE} &= \lambda \,f_1 (\mathbf{U}, \tilde{\mathbf{X}}) + (1- \lambda) \sum_{k=1}^P\sum_{l=1}^P\|\diag(\mathbf{U}\hat{\mathbf{C}}^{k,l}\mathbf{U}^\top)\|^2 
\end{align}
where $f_1 (\mathbf{U}, \tilde{\mathbf{X}})$ is the graph decorrelation objective defined in \eqref{GraDeobj} and the second term $f_2 (\mathbf{U}, \tilde{\mathbf{X}}) = \sum_{k=1}^P \sum_{l=1}^P \| \diag( \mathbf{U} \hat{\mathbf{C}}^{k,l} \mathbf{U}^\top) \|^2$ is the objective function of JADE that uses the following set of fourth order cumulant matrices 
\begin{align}
\label{JADEmat}
\hat{\mathbf{C}}^{k,l} =& \sum_{i=1}^N \tilde{x}_{k,i} \tilde{x}_{l,i} \tilde{\bf x}_{i} \tilde{\bf x}_{i}^\top - \mathbf{E}_{P \times P}^{k,l} - \mathbf{E}_{P \times P}^{l,k} - \mbox{tr} (\mathbf{E}_{P \times P}^{k,l} )\mathbf{I}_P, \quad k,l=1,\dots,P. 
\end{align}

Fortunately, the maximization of \eqref{GraphJADE} subject to the orthogonality constraint appears to be a straightforward extension of the well-known joint approximate diagonalization algorithm. Indeed, since the maximization of the graph decorrelation part of the composite objective \eqref{GraphJADE}, that is, $f_1 (\mathbf{U}, \tilde{\mathbf{X}})$ as well as the maximization of the JADE part -- $f_2 (\mathbf{U}, \tilde{\mathbf{X}})$, can be done in terms of joint approximate diagonalization, the maximization of the sum can be done by joint approximate diagonalization of all matrices involved. 
It can be implemented, for example, by the algorithm in~\cite{Clarkson1988} based on Givens rotations, and it yields a fast solution if the number of ICs, $P$, is not too large. 

In general, the selection of the weight parameter $\lambda$ depends on which difference between ICs is stronger: the graph dependence or the deviation of the ICs' distribution from the Gaussian one. However, we confirmed by extensive simulations that one value of $\lambda$ can succeed in a wide range of setups and there is no apparent need to find the case-specific optimal value of $\lambda$. For Graph JADE, we suggest\footnote{\label{lambdachoice} A rule of thumb for choosing $\lambda$ is that the expected values of $\lambda f_1$ and $(1-\lambda ) f_2$ should be approximately equal, when the ICs are non-Gaussian graph signals.} $\lambda = 0.8$. 

\subsection{Graph FastICA}
Another alternative is the FastICA based graph decorrelation, called hereafter as {\it Graph FastICA}. The corresponding optimization problem is then to find the orthogonal matrix $\mathbf{U}=(\mathbf{u}_1, \cdots, \mathbf{u}_P)^\top$ that maximizes the following composite objective function
\begin{align}
\label{GraphFastICA}
f_{\rm GraphFastICA} &= \lambda  f_1 (\mathbf{U}, \tilde{\mathbf{X}}) + (1 - \lambda) \sum_{j=1}^P\left(\sum_{i=1}^N \frac{1}{N}G(\boldsymbol{u}_j^\top \tilde{\mathbf{x}}_{i})\right)^2
\end{align}
where $f_1 (\mathbf{U}, \tilde{\mathbf{X}})$ is again given by \eqref{GraDeobj} and the second term in the composite objective function is that of the squared symmetric FastICA~\cite{Miettinenetal2017}. Notice that in the classical symmetric FastICA, the absolute value replaces the second power. The squared version puts more weight to the components that are more non-Gaussian, which is an advantage when the FastICA objective is extended with other objectives such as in here. The function $G (\cdot)$ is a twice continuously differentiable nonlinear and nonquadratic function satisfying $\E \{G(y)\} = 0$ for a standardized Gaussian random variable $y$. We use arguably the most popular choice of function $G(x) = \mbox{log} (\mbox{cosh}(x)) - \E \{\mbox{log}(\mbox{cosh}(y))\}$, which is referred to as {\em tanh} according to the first derivative $g(x)=G'(x) = \mbox{tanh}(x)$. Choice
of $\lambda$ here depends on the function $G$.

To address the problem of maximizing \eqref{GraphFastICA} subject to $\mathbf{U}\mathbf{U}^\top =\mathbf{I}_P$, we first consider the maximization of $f_1$. The corresponding nonconvex optimization problem is 
\begin{equation} \label{optF1}
\underset{\mathbf{U}\mathbf{U}^\top =\mathbf{I}_{P}}{\max} \sum_{p=1}^P \sum_{k=1}^K \| \diag( \mathbf{U} \tilde{\mathbf{S}}_p^k \mathbf{U}^\top ) \|^2,
\end{equation}
where for the sake of brevity we use a simplified notation $\tilde{\mathbf{S}}_p^k$ for  $\tilde{\mathbf{S}} (\tilde{\mathbf{X}}, \mathbf{W}_p^k)$. 

Since problem \eqref{optF1} is nonconvex, we can find only a suboptimal solution, which is a stable point, i.e., the point at which Karush-Kuhn-Tucker (KKT) conditions are satisfied. Towards this end, we formulated the corresponding  Lagrangian function as
\begin{align*} 
L_g(\mathbf{U}, \mathbf{\Xi}) =& \sum_{i=1}^P \sum_{p=1}^P \sum_{k=1}^K (\mathbf{u}_i^\top \tilde{\mathbf{S}}_p^k \mathbf{u}_i)^2 - \sum_{i=1}^P \xi_{i,i} (\mathbf{u}_i^\top \mathbf{u}_i-1) - \sum_{i=1}^P \sum_{j=1}^{i-1} \xi_{j,i} \mathbf{u}_j^\top \mathbf{u}_i
\end{align*}
where $\mathbf{\Xi}$ is a symmetric matrix containing the Lagrange multipliers $\xi_{j,i}, \quad i,j = 1, \cdots, P$. One KKT condition is obtained by taking the first derivative of $L_g(\mathbf{U}, \mathbf{\Xi})$ with respect to $\mathbf{u}_j$ and setting it to zero. As a result, the following equation is obtained 
\begin{align*}
2\sum_{p=1}^P \sum_{k=1}^K \tilde{\mathbf{S}}_p^k \mathbf{u}_j \mathbf{u}_j^\top \tilde{\mathbf{S}}_p^k \mathbf{u}_j 
= 2\xi_{j,j}\mathbf{u}_j+\sum_{i=1}^{j-1} \xi_{i,j} \mathbf{u}_i+\sum_{i=j+1}^{P} \xi_{j,i} \mathbf{u}_i.
\end{align*}
When both sides of the above equation are multiplied from left by $\mathbf{u}_i$, $i\neq j$, the orthogonality of $\mathbf{U}$ yields
\begin{align*}
&2\mathbf{u}_i^\top\sum_{p=1}^P \sum_{k=1}^K \tilde{\mathbf{S}}_p^k \mathbf{u}_j \mathbf{u}_j^\top \tilde{\mathbf{S}}_p^k \mathbf{u}_j = \xi_{i,j}\ (\xi_{j,i})\text{ if } i<j\ (i>j).
\end{align*}

Since $\mathbf{\Xi}$ is a symmetric matrix, the following equations\footnote{See also \cite{miettinen2016separation} for the case of time series autocovariance matrices.} have to hold true
\begin{align}
\label{EstEq}
&\mathbf{u}_i^\top \sum_{p=1}^P \sum_{k=1}^K \tilde{\mathbf{S}}_p^k \mathbf{u}_j \mathbf{u}_j^\top \tilde{\mathbf{S}}_p^k \mathbf{u}_j = \mathbf{u}_j^\top \sum_{p=1}^P \sum_{k=1}^K \tilde{\mathbf{S}}_p^k \mathbf{u}_i \mathbf{u}_i^\top \tilde{\mathbf{S}}_p^k \mathbf{u}_i \notag \\
& \qquad \mathbf{u}_i^\top \mathbf{u}_i = 1, \quad i = 1, \cdots, P \\
& \qquad \mathbf{u}_i^\top \mathbf{u}_j = 0, \quad i\neq j, \quad i,j = 1, \cdots, P .\notag
\end{align}
Thus, it is also guaranteed that the other KKT condition, which is the constraint of problem \eqref{optF1} is also satisfied. Equations \eqref{EstEq} can be then used for estimating the rows of $\mathbf{U}$. 

Combining \eqref{EstEq} with the FastICA update rule \cite{HyvarinenOja1997}, which applies for the second term of the composite objective function \eqref{GraphFastICA} straightforwardly, a fixed-point algorithm for updating $\mathbf{u}_j,\ j=1,\dots,P$ of our proposed Graph FastICA algorithm can be then constructed as
\begin{align}
&\mathbf{u}_j  \leftarrow 2 \lambda \sum_{p=1}^P \sum_{k=1}^K \tilde{\mathbf{S}}_p^k \mathbf{u}_j \mathbf{u}_j^\top \tilde{\mathbf{S}}_p^k \mathbf{u}_j +(1-\lambda)\left( \frac{1}{N} \sum_{i=1}^N G(\mathbf{u}_j^\top \tilde{\mathbf{x}}_{i}) \right) \notag \\
&\times  \left( \frac{1}{N} \sum_{i=1}^N \tilde{\mathbf{x}}_{i} g(\mathbf{u}_j^\top \tilde{\mathbf{x}}_{i}) - \frac{1}{N} \sum_{i=1}^N g'(\mathbf{u}_j^\top \tilde{\mathbf{x}}_{i}) \mathbf{u}_j \right).
\label{alg}
\end{align}  
To be specific, the second part of the update rule \eqref{alg} is the updating step of the squared symmetric FastICA \cite{Hyvarinen1999}. Here $g'(\cdot)$ is the first-order derivative function of $g(\cdot)$, or equivalently, the second-order derivative function of $G(\cdot)$.

After updating all rows of $\mathbf{U}=(\mathbf{u}_1, \cdots, \mathbf{u}_P)^\top$, the orthogonalization has to be also ensured. This orthogoanlization can be easily performed by updating $\mathbf{U}$ as  $\mathbf{U} \leftarrow \left(\mathbf{U}\mathbf{U}^\top\right)^{-1/2}\mathbf{U}$. The above update and orthogonalization steps are repeated until convergence. To ensure that the vectors of the two parts (top row and bottom row) in~\eqref{alg} do not point to opposite directions, a vector is multiplied by $-1$, if in that vector the element, which has the largest absolute value, is negative. 

\subsection{Identifiability Conditions}
The identifiability conditions for the matrix of ICs $\mathbf{Z}$, which the proposed methods require for consistent estimation of the unmixing matrix, can be derived by studying the two parts, graph decorrelation and non-Gaussianity based separation, separately. Since graph decorrelation uses joint approximate diagonalization of graph autocorrelation matrices, it is required that for each pair of ICs $\mathbf{z}_i$ and $\mathbf{z}_j$, there is at least one matrix $\mathbf{W}\in\mathcal{W}$ such that the corresponding graph autocorrelations of $i$th and $j$th  ICs differ, i.e., the $i$th and $j$th diagonal elements of $\tilde{\mathbf{S}}(\mathbf{Z},\mathbf{W})$ differ. This condition follows from the fact that for the diagonal matrix with equal diagonal elements, $\mathbf{B} = b \mathbf{I}_{P}$, the matrix $\mathbf{V} \mathbf{B} \mathbf{V}^\top$ is diagonal for any orthogonal matrix $\mathbf{V}$. Whereas when there are unequal diagonal elements for all pairs of ICs, the orthogonal matrix, which makes the graph autocorrelation matrices in the set $\{ \tilde{\mathbf{S}} (\tilde{\mathbf{X}}, \mathbf{W}_p^k) \}_{p=1,\dots,P,\ k=1,\dots,K}$ as diagonal as possible, is unique up to sign changes of its rows. 

For an ICA method, which is based on non-Gaussianity, the well-known principle is that there can be at most one component which is similar to Gaussian component with respect to the measure of non-Gaussianity specific for the method. In JADE, the measure is the fourth moment, while in FastICA, it is a function $G$. Hence, the identifiability conditions for (a)~Graph JADE and (b)~Graph FastICA using the set of adjacency matrices and their powers, $\mathcal{W}$, can be stated as follows. 

For any pair of ICs $\boldsymbol z_i$ and $\boldsymbol z_j$ (i)~there is a matrix $\mathbf{W}\in \mathcal{W}$ such that $\E\{\diag(\tilde{\mathbf{S}}(\mathbf{Z}, \mathbf{W}))_i\}\neq \E\{\diag(\tilde{\mathbf{S}}(\mathbf{Z}, \mathbf{W}))_j\}$ or (ii)~(a) 
$\E\{N^{-1}\sum_{n=1}^N z_{i,n}^4\}\ne 3$ or $\E\{N^{-1}\sum_{n=1}^N z_{j,n}^4\}\ne 3$; (b) $\E\{N^{-1}$ $\sum_{n=1}^N G(z_{i,n})\}\ne 0$ or $\E\{N^{-1}\sum_{n=1}^N G(z_{j,n})\}\ne 0$. 

Allowing several adjacency matrices in $\mathcal{W}$ does not expand the identifiability conditions as compared to the case when there is only one adjacency matrix, provided that this matrix is chosen correctly. However, in Section~\ref{sec:simul}, it is shown that the gain in separation efficiency can be significant.

\section{Numerical Study}
\label{sec:simul}

The accuracy of the unmixing matrix estimation is measured by the minimum distance (MD) index~\cite{Ilmonenetal2010a} 
\[D(\hat{\mathbf{\Gamma}}) = \frac{1}{\sqrt{P-1}}\inf_{\mathbf{C}\in\mathcal{C}}\|\mathbf{C}\hat{\mathbf{\Gamma}}\mathbf{\Omega}-\mathbf{I}_p\|
\]
where $\mathcal{C} = \{ \mathbf{C}\ : \mbox{ each row and column of $\mathbf{C}$ has exactly one}$ $\mbox{non-zero element} \}$.
The MD index is zero when $\hat{\mathbf{\Gamma}}$ can be obtained from $\mathbf{\Omega}^{-1}$ by permutation and rescaling of the rows. The maximum value one occurs when $\hat{\mathbf{\Gamma}}$ has rank one.
When $\mathbf{\Omega}=\mathbf{I}_{P\times P}$ and the estimator $\hat{\mathbf{\Gamma}}$ is asymptotically normal (the elements $\hat{{\gamma}}_{ij}$ are asymptotically normal), which is often the case (see for example \cite{Miettinenetal2017, Ilmonenetal2010b, Miettinenetal2012}), the minimum distance index is related to the sum of variances of the off-diagonal elements of $\hat{\mathbf{\Gamma}}$ as 
$N(P-1)\E\{D(\hat{\mathbf{\Gamma}})^2\}\to \sum_{i\neq j=1}^P\mbox{var}\{\hat{{\gamma}}_{ij} \} ,\ \mathrm{as}\  N\to \infty$.
\begin{figure}
\begin{minipage}[b]{1.0\linewidth}
\centering
\includegraphics[width=15.8cm]{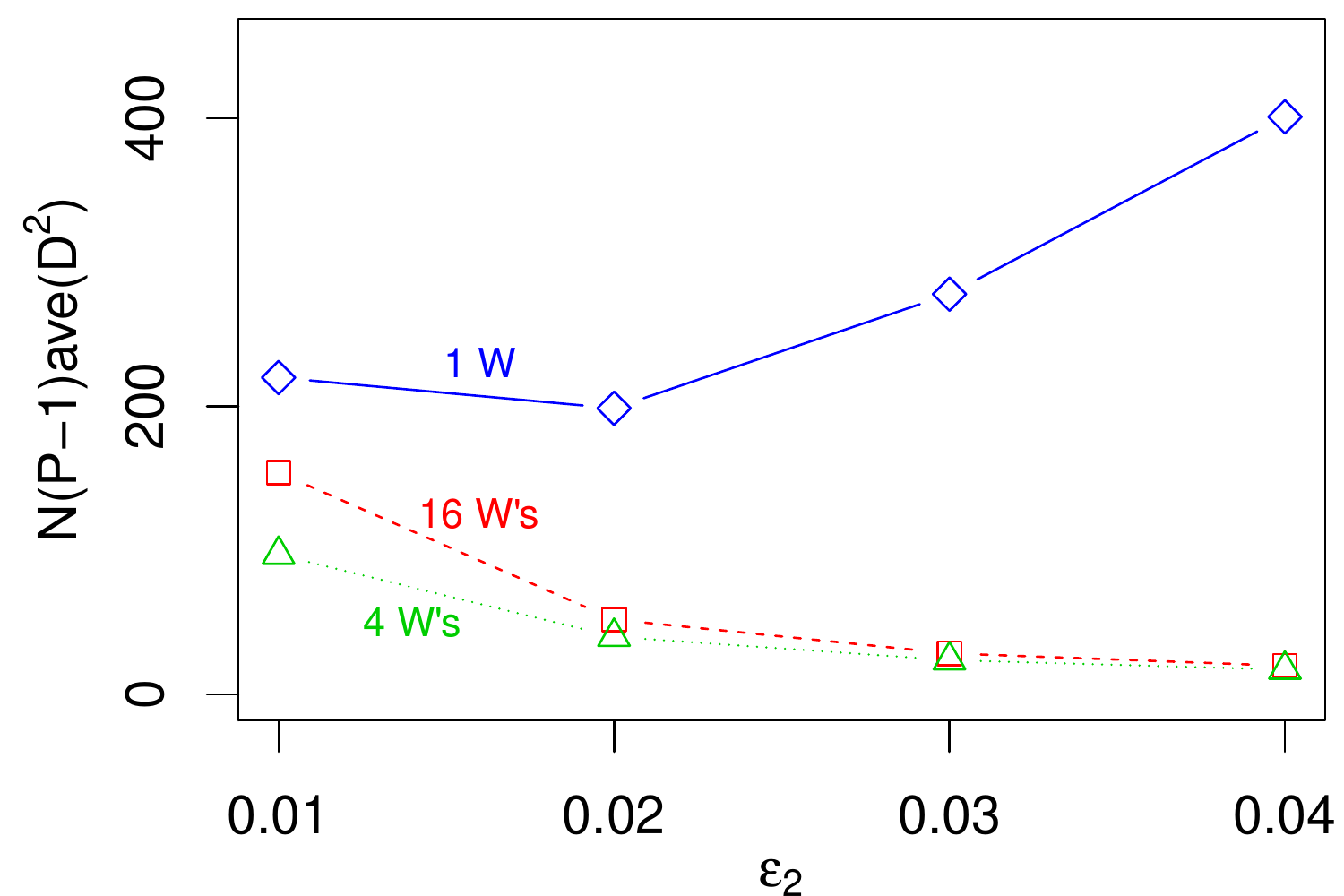}	
\end{minipage}
\caption{The average values of $N(P-1)D^2(\hat{\mathbf{\Gamma}})$ over 1000 repetitions.}
\label{fig1}
\end{figure}

\subsection{Different Adjacency Matrices}

We begin with studying the use of several adjacency matrices when the ICs have different dependence structures. We consider four models with increasing degree of difference between the adjacency matrices. In each model, there are $P=4$ ICs which are Gaussian GMA$(1)$ signals of size $N=500$ and the GMA coefficients are $\theta=0.32,0.16,0.08,0.04$. The adjacency matrices are generated so that the first one $\mathbf{W}_1 = \mathbf{\Delta}_{\epsilon}$ is obtained from Erd\"{o}s--R\'{e}nyi random graph model \cite{ErdosRenyi1959, Gilbert1959}, where $\epsilon$ is a constant probability of edge existence between two randomly picked nodes, and $\epsilon=0.05$. The other three adjacency matrices $\mathbf{W}_2, \mathbf{W}_3, \mathbf{W}_4$ are obtained from $\mathbf{W}_1$ using the graph error model \cite{MiettinenVorobyovOllila2018, USjournal}
$\mathbf{W}_{\epsilon_1,\epsilon_2} = \mathbf{W} - \mathbf{\Delta}_{\epsilon_1}\odot \mathbf{W} + \mathbf{\Delta}_{\epsilon_2}\odot (\mathbf{1}_{N\times N}-\mathbf{W})$,
where $\epsilon_1$ and $\epsilon_2$ are the probabilities of removing and adding an edge from/to a given graph $\mathbf{W}$. In the first model parameter pair $(\epsilon_1,\epsilon_2)=(0.19,0.01)$ is used, in the second $(0.38,0.02)$, in the third  $(0.57,0.03)$, and in the fourth $(0.76,0.04)$. The values of $\epsilon_1$ and $\epsilon_2$ are coordinated so that the graphs have approximately equally many edges. In each model, adjacency matrices $\mathbf{W}_2, \mathbf{W}_3, \mathbf{W}_4$ are generated using the same pair $\epsilon_1, \epsilon_2$. Three graph decorrelation based estimators ($\lambda=1$) are compared, the first one $\mathbf{\Gamma}_1$ uses only $\mathbf{W}_1$, the second one $\mathbf{\Gamma}_2$ uses $\{\mathbf{W}_1, \mathbf{W}_2, \mathbf{W}_3, \mathbf{W}_4\}$, and the third one $\mathbf{\Gamma}_3$ uses also extra twelve Erd\"{o}s--R\'{e}nyi adjacency matrices in addition to the four above which are independent of $\mathbf{W}_1$. For each estimator in each model, the average of $N(P-1)D(\hat{\mathbf{\Gamma}})^2$ over 1000 repetitions is plotted in Fig.~\ref{fig1}. It can be seen that $\mathbf{\Gamma}_2$ is better than $\mathbf{\Gamma}_1$ as expected and the difference grows with $\epsilon_2$. This is because $\mathbf{\Gamma}_1$ becomes worse when the adjacency matrices differ more, and also because $\mathbf{\Gamma}_2$ improves. The extra adjacency matrices in $\mathbf{\Gamma}_3$ seem to hamper the estimation, but not very significantly. 

\subsection{ML estimation and CRB}

In this subsection, we first define an ML estimator for the BSS problem with GMA source signals, and then compare it (as well as  GraDe) to the corresponding CRB derived in Section~\ref{sec:CRB} through numerical examples. Unlike GraDe, the ML estimator assumes a model for the source signal, and thus, we can expect it to perform better than GraDe when the model is correctly specified. Besides the assumption on the source model, the use of ML estimator in practice is limited to graphs of moderate size because of computational issues. However, it is interesting to see for the theoretical study how close the ML-based solution and the solution based on the objective \eqref{GraDeobj} are to the CRB. For simplicity, we consider in what follows the BSS model with $P=2$ ICs and GMA signals of order 1, which will be the setup in our numerical studies as well.

Assume that the ICs $\mathbf{z}_1$ and $\mathbf{z}_2$ are Gaussian GMA$(1)$ signals with adjacency matrices $\mathbf{W}_1$ and $\mathbf{W}_2$, GMA coefficients $\theta_1$ and $\theta_2$, and variance parameters $\sigma_1^2$ and $\sigma_2^2$, respectively. The log-likelihood (after removing constants) as a function of the graph signal $\mathbf{z}$, GMA coefficient $\theta$, and adjacency matrix $\mathbf{W}$, is given as
\begin{align}
\label{logLik}
L(\mathbf{z},\theta,\mathbf{W}) &= -0.5\mathbf{z}^\top\mathbf{C}^{-1} (\theta,\sigma^2,\mathbf{W}) \mathbf{z}  -\log(\det(\mathbf{C} (\theta,\sigma^2,\mathbf{W}))) 
\end{align}
where the covariance matrix of the graph signal $\mathbf{z}$ is 
$\mathbf{C} (\theta, \sigma^2, \mathbf{W}) = \sigma^2 (\mathbf{I}_N + \theta\mathbf{W}) (\mathbf{I}_N + \theta\mathbf{W})^\top$.

The basic algorithm for the ML estimator of the unmixing matrix contains the following steps.
\begin{itemize}
	\item[1.] Find an initial estimate $\hat{\mathbf{Z}}_0 = (\hat{\mathbf{z}}_{01}, \hat{\mathbf{z}}_{02})^\top = \hat{\mathbf{U}}_0 \tilde{\mathbf{X}}$ using the objective \eqref{GraDeobj} for obtaining $\hat{\mathbf{U}}_0$.
	
	\item[2.] Estimate the GMA coefficients $\hat{\theta}_1$ and $\hat{\theta}_2$ of $\hat{\mathbf{z}}_{01}$ and $\hat{\mathbf{z}}_{02}$, respectively, if they are also unknown.
	
	\item[3.] Find the orthogonal matrix $\hat{\mathbf{U}}_1$ such that $(\hat{\mathbf{z}}_{11},\hat{\mathbf{z}}_{12})^\top=\hat{\mathbf{U}}_1\tilde{\mathbf{Z}}_0$ maximizes $L(\hat{\mathbf{z}}_{11},\hat{\theta}_1,\mathbf{W}_1)+L(\hat{\mathbf{z}}_{12},\hat{\theta}_2,\mathbf{W}_2)$.
	
	\item[4.] Compute the final estimate of $\mathbf{\Gamma}$ as  $\hat{\mathbf{U}}_1\hat{\mathbf{U}}_0\hat{\mathbf{S}_0}^{-1/2}$.
\end{itemize}

In steps 2 and 3 of the above algorithm, the maximization is carried out by grid search. Notice that the $2\times 2$ orthogonal matrix $\mathbf{U}$ can be represented using a single parameter $\phi$ as 
\[
\mathbf{U}=\left[ {\begin{array}{cc}
	\text{cos}(\phi) & -\text{sin}(\phi)  \\
	\text{sin}(\phi) & \text{cos}(\phi)  \\					
	\end{array} } \right].
\]
A $P\times P$ orthogonal matrix has $P(P-1)/2$ free parameters, and therefore, the estimation in this way becomes quickly infeasible. For $P>2$, the hybrid exact-approximate diagonalization (HEAD) approach \cite{YEREDOR_GAUSSIAN_CRB} would give an approximation of the ML estimate.

A natural idea for improving the above algorithm is to iterate steps 2 and 3 several times until convergence. However, in our simulations, this idea did not lead to any significant improvements of the results, and thus we will use just the one step version in the simulation studies of Section~\ref{sec:simul}.

To simplify the calculations, we note that $\sigma^2_p$, $p=1,2$, do not need to be estimated by a grid search simultaneously with $\theta_p$,  $p=1,2$, but since $\mathbf{z}_p$ can be assumed without loss of generality in BSS model to be standardized, $\sigma^2_p$ can be connected to $\theta_p$ by setting $\sigma^2_p(\theta_p) = N/\text{tr} ((\Imat_N + \theta_p \Wmat_p) (\Imat_N + \theta_p \Wmat_p)^\top)$ as in \eqref{sigma2_theta_depend}.

Simulations are carried out in BSS models with two ICs which are built from three graphs generated by stochastic block model with two communities, geometric model, and Erd\"{o}s--R\'{e}nyi model. The size of each graph is $N=250$. The community graph is created by dividing the nodes in two blocks and connecting nodes within each block with probability 0.13 and between blocks with probability 0.01. The geometric graph is generated by scattering vertices randomly on the unit square and connecting a pair of vertices if their Euclidean distance is smaller than 0.16. In Erd\"{o}s--R\'{e}nyi model the parameter (probability of edge existence between a pair of nodes) is 0.07. The numbers above are chosen so that the number of edges in the graphs would be approximately equal. CRB calculations and estimator performance simulations are performed for all combinations, but the results are presented in the following four combinations of these graphs: (C1) first IC is based on community graph and second IC on geometric graph, (C2) first IC is based on community and second on Erd\"{o}s--R\'{e}nyi graph, (C3) first IC is based on geometric and second on Erd\"{o}s--R\'{e}nyi graph, and (C4) both ICs are based on the same community graph. The GMA$(1)$ parameter of the first IC is kept fixed at $\theta_1 = 0.1$, and the GMA$(1)$ parameter $\theta_2$ of the second IC is allowed to vary in the range $0.01,0.02,\dots,0.4$. The results for community and Erd\"{o}s--R\'{e}nyi graphs are very similar, so for example (C1) is almost equivalent to the case where first IC is based on Erd\"{o}s--R\'{e}nyi graph and second IC on geometric graph. The geometric graph is more challenging than the other two for all algorithms, and the CRB confirms it as well.  

In Fig.~\ref{fig2}, $\text{tr}(\textbf{CRB}_{\Imat_2})$ and $((\textbf{CRB}_{\Imat_2})_{2,2} + (\textbf{CRB}_{\Imat_2})_{3,3})$ are compared to the sum of variances of all the elements of $\hat{\mathbf{\Omega}}$ and the sum of off-diagonal elements of $\hat{\mathbf{\Omega}}$ in 2000 runs of each model. Both CRB values and variances are multiplied by $N$. We include three estimators, the proposed Gaussian graph signal BSS method, referred hereafter simply as GraDe, the proposed ML estimator where an initial estimate is obtained by using GraDe, and an oracle estimator where the initial estimate is set to the true value. 

\begin{figure*}[]
	\begin{minipage}[b]{1.0\linewidth}
		\centering
		\includegraphics[width=7.0cm]{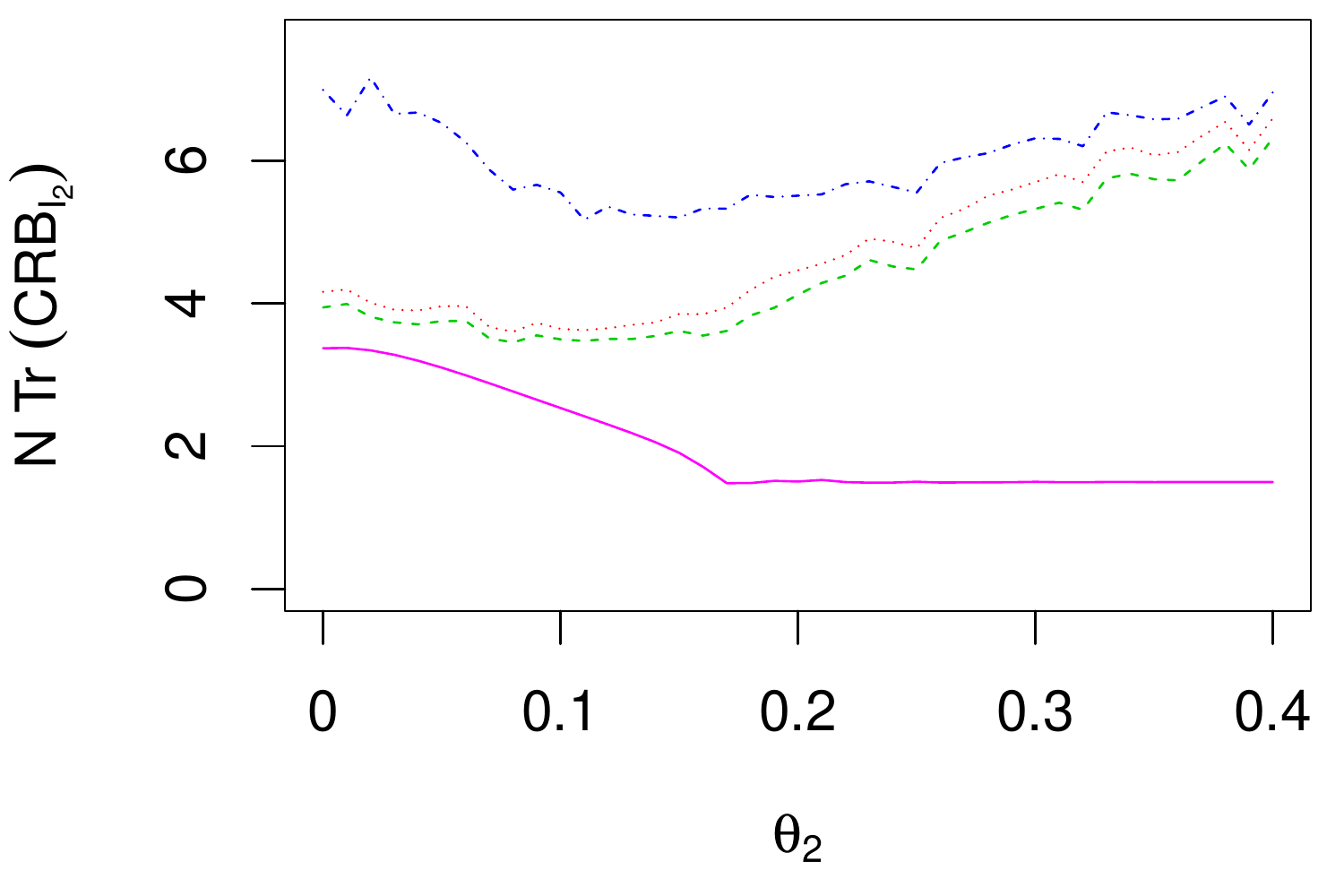}
		\hspace{1.1cm}
		\includegraphics[width=7.0cm]{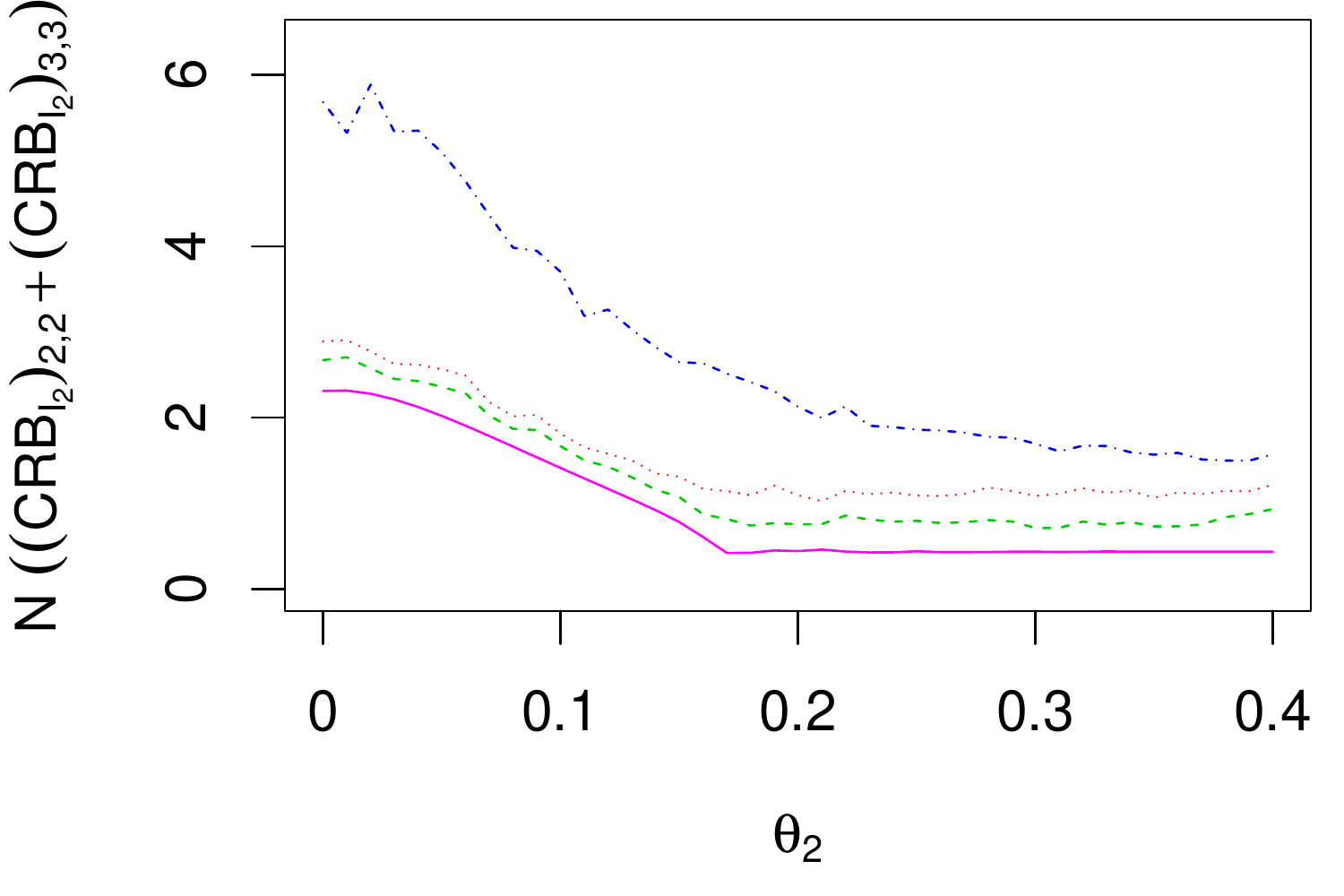}
		\\
		\vspace{0.2cm}
		\includegraphics[width=7.0cm]{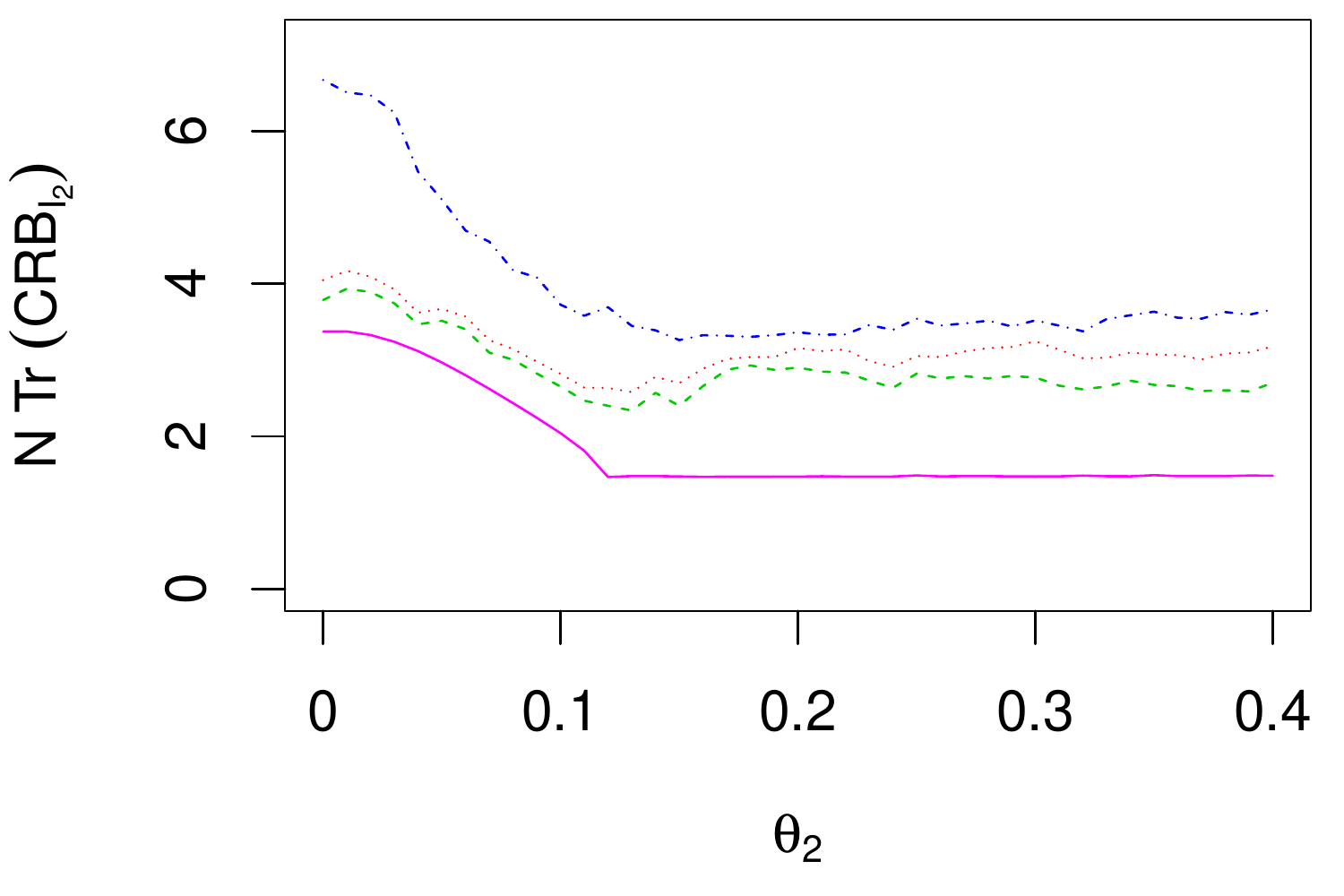}
		\hspace{1.1cm}
		\includegraphics[width=7.0cm]{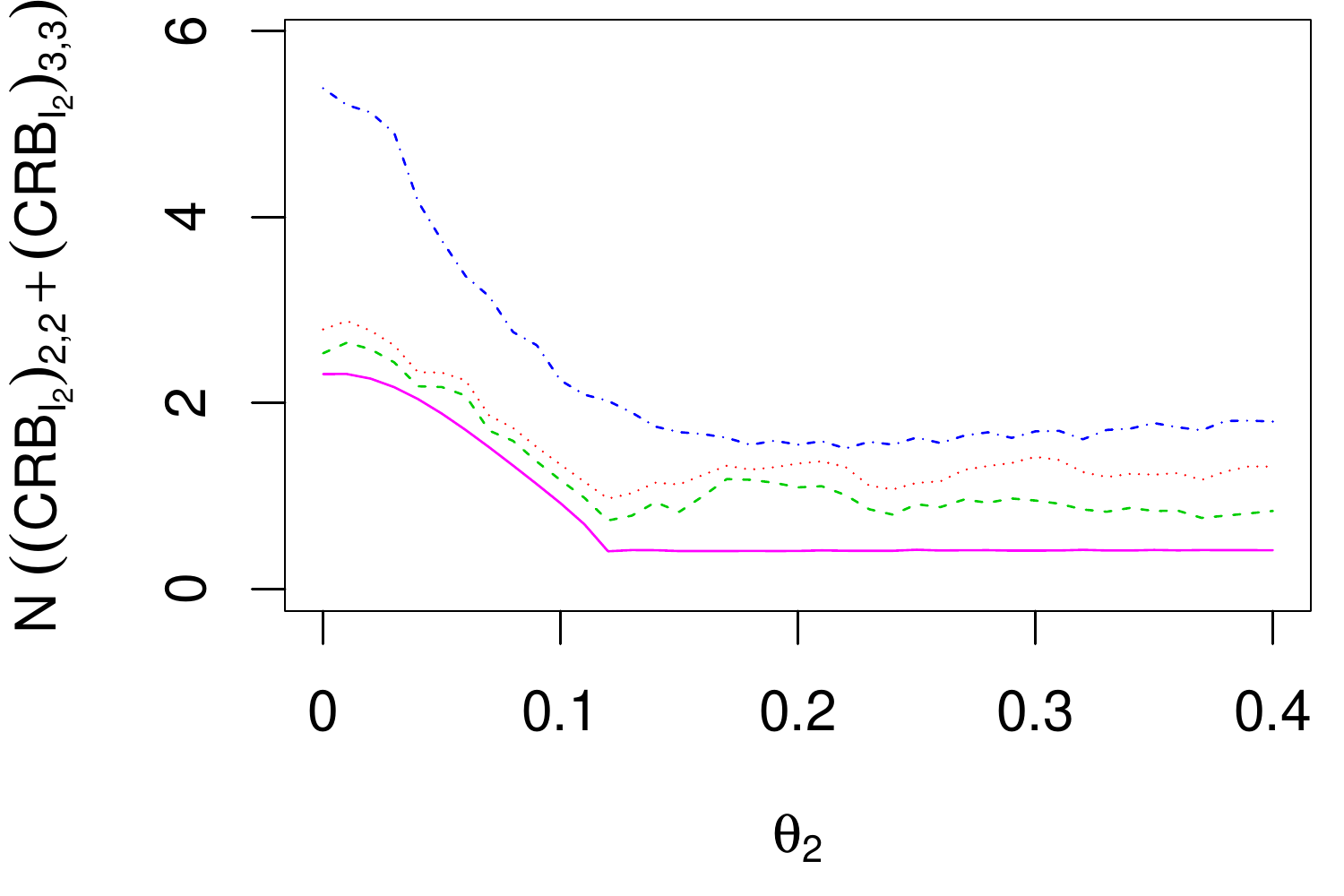} \\
		\vspace{0.2cm}
		\includegraphics[width=7.0cm]{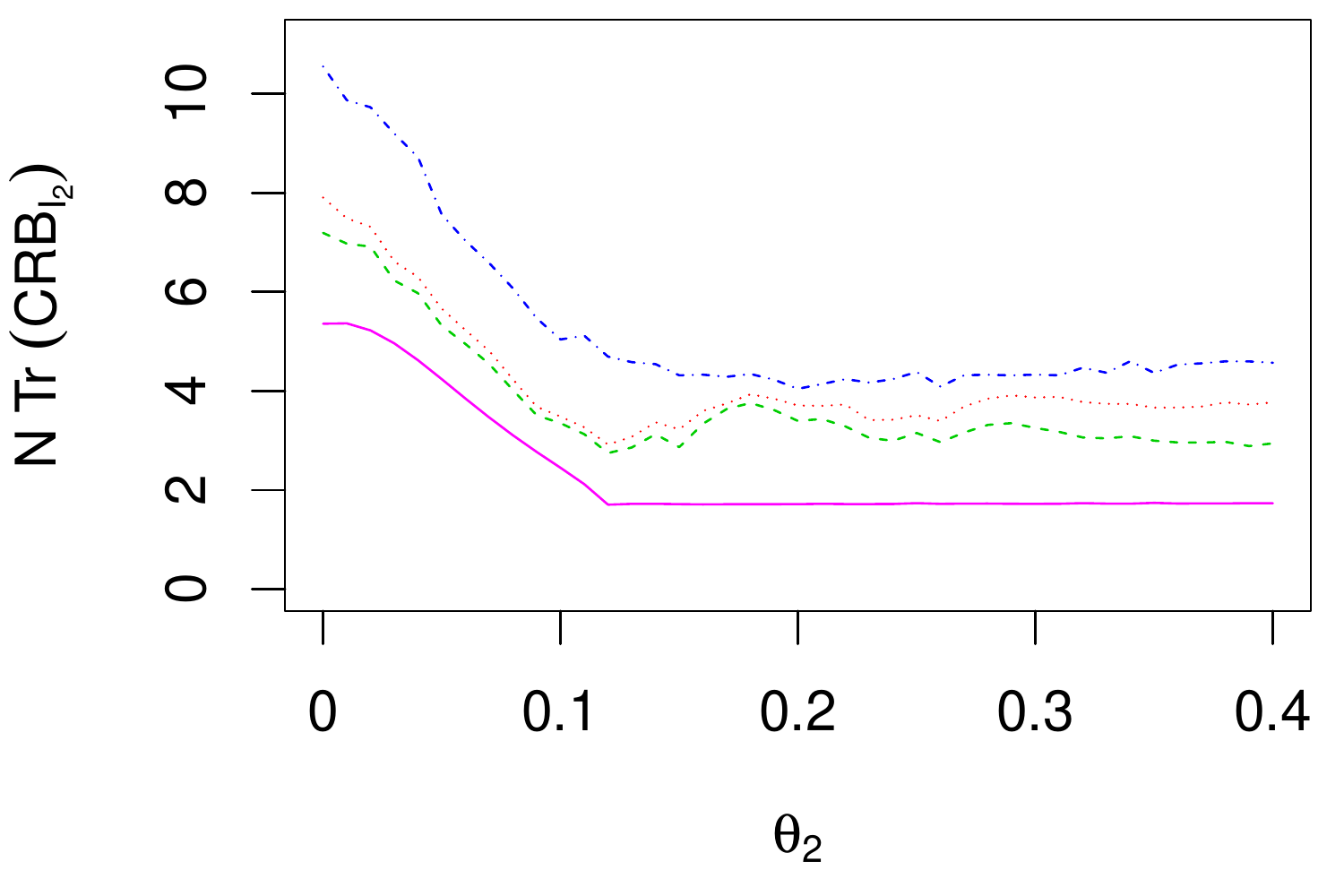}
		\hspace{1.1cm}
		\includegraphics[width=7.0cm]{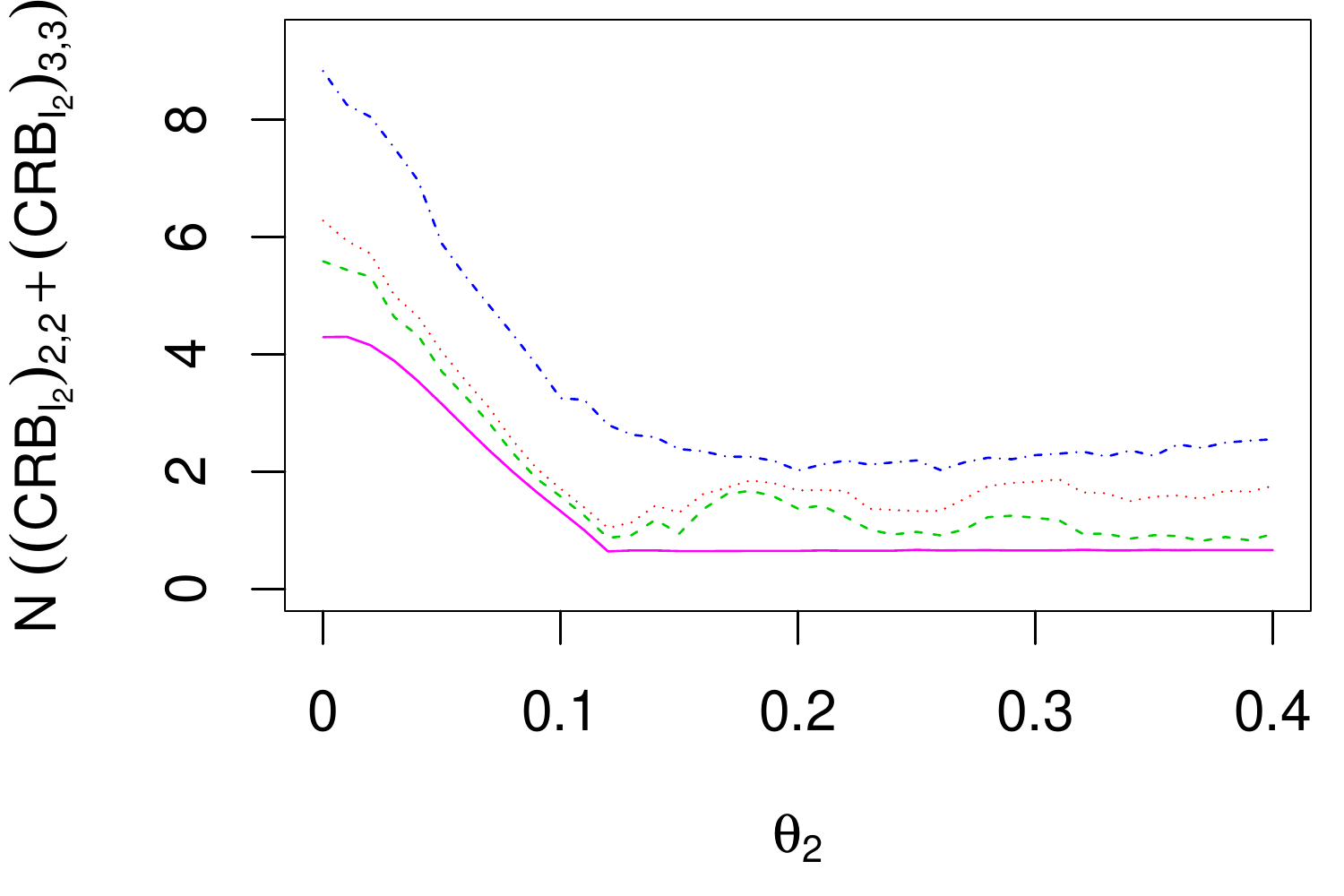} \\
		\includegraphics[width=7.0cm]{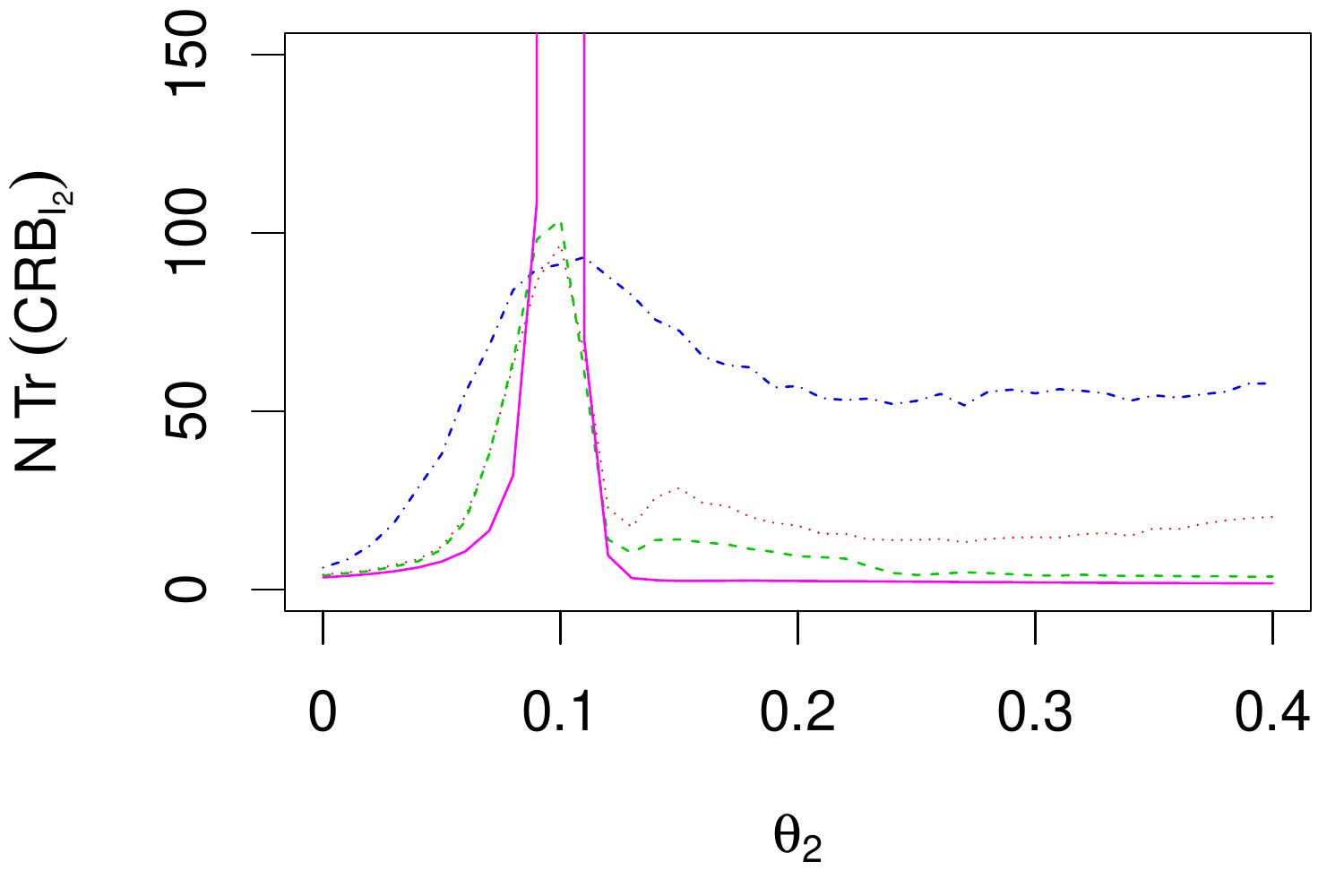}
		\hspace{1.1cm}
		\includegraphics[width=7.0cm]{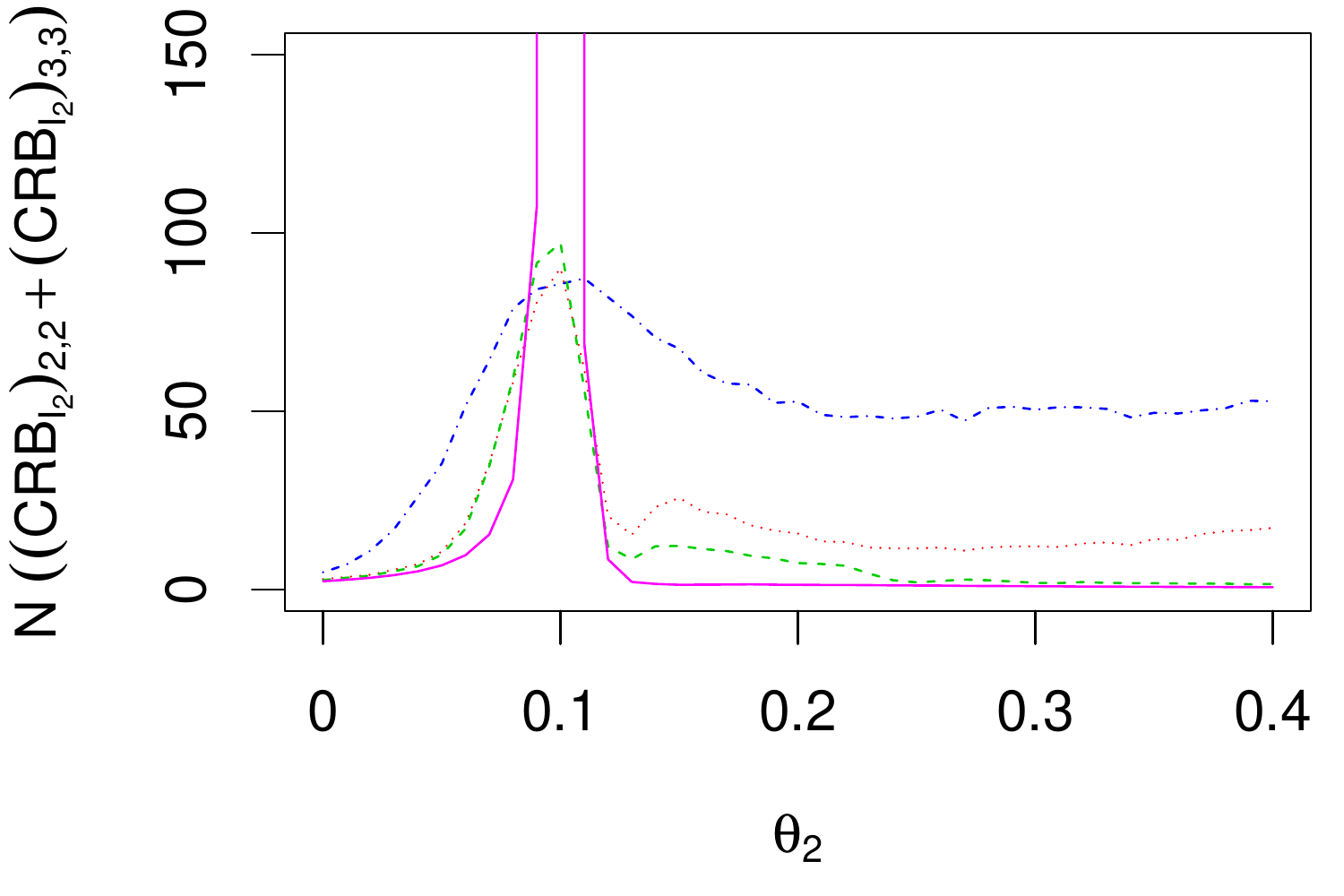}\\
		\includegraphics[width=19.5cm]{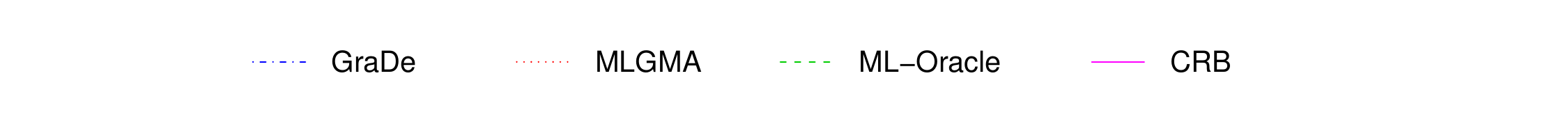}
	\end{minipage}
	\vspace{-1.0cm}
	\caption{$\text{Tr}(\textbf{CRB}_{\Imat_2})$ on the left-hand side and  $((\textbf{CRB}_{\Imat_2})_{2,2}+(\textbf{CRB}_{\Imat_2})_{3,3})$ on the rigth-hand side and the corresponding simulation results in 2000 runs. From top to bottom: (C1) Two-communities and distance based graphs, (C2) two-communities and Erd\"{o}s-R\'{e}nyi graphs, (C3) distance based and Erd\"{o}s-R\'{e}nyi graphs and (C4) the same two-communities graph for both ICs.}
	\label{fig2}
\end{figure*}

\begin{figure*}[]
	\begin{minipage}[b]{1.0\linewidth}
		\centering
		\includegraphics[width=6.1cm]{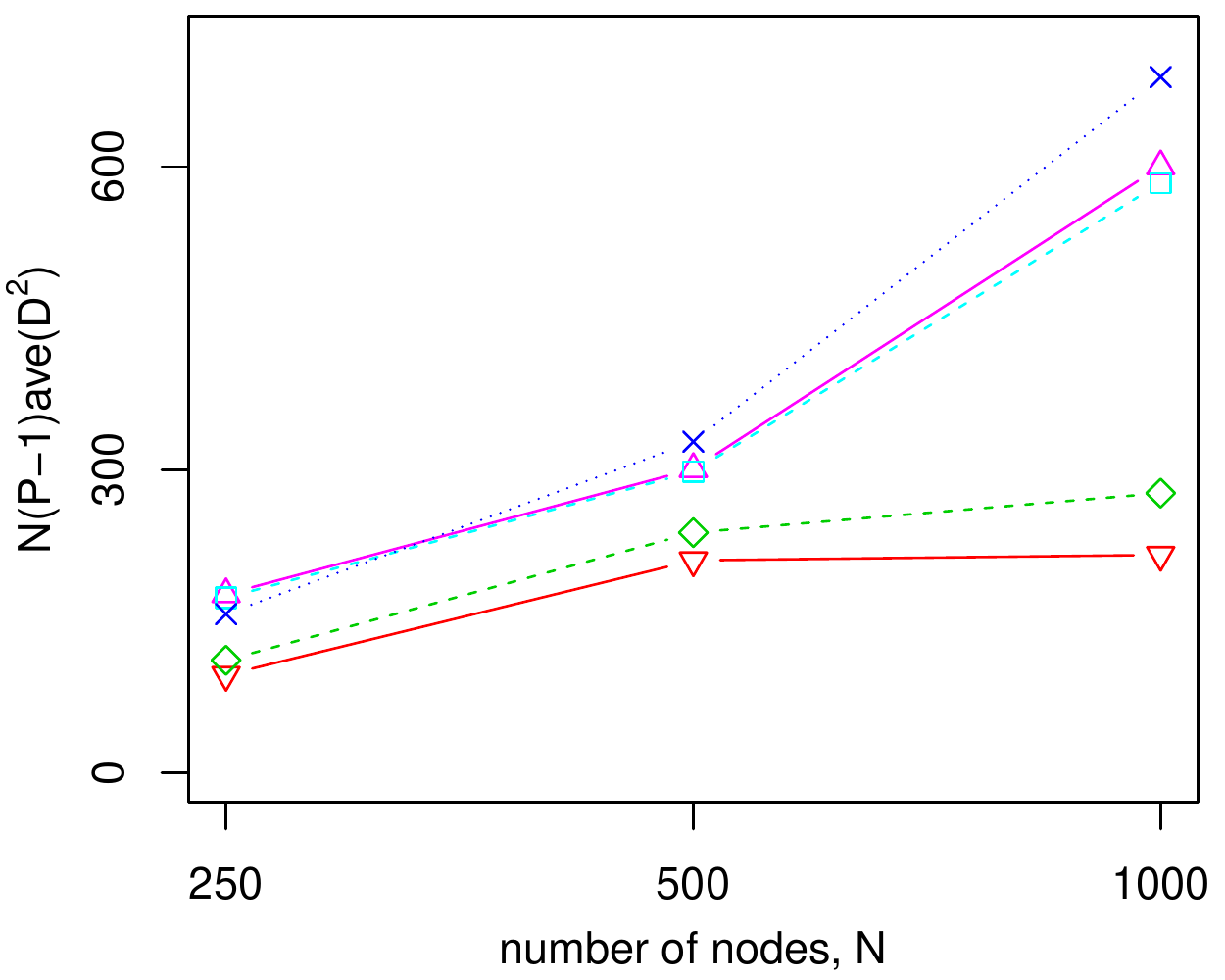}
		\hspace{1.1cm}
		\includegraphics[width=5.7cm]{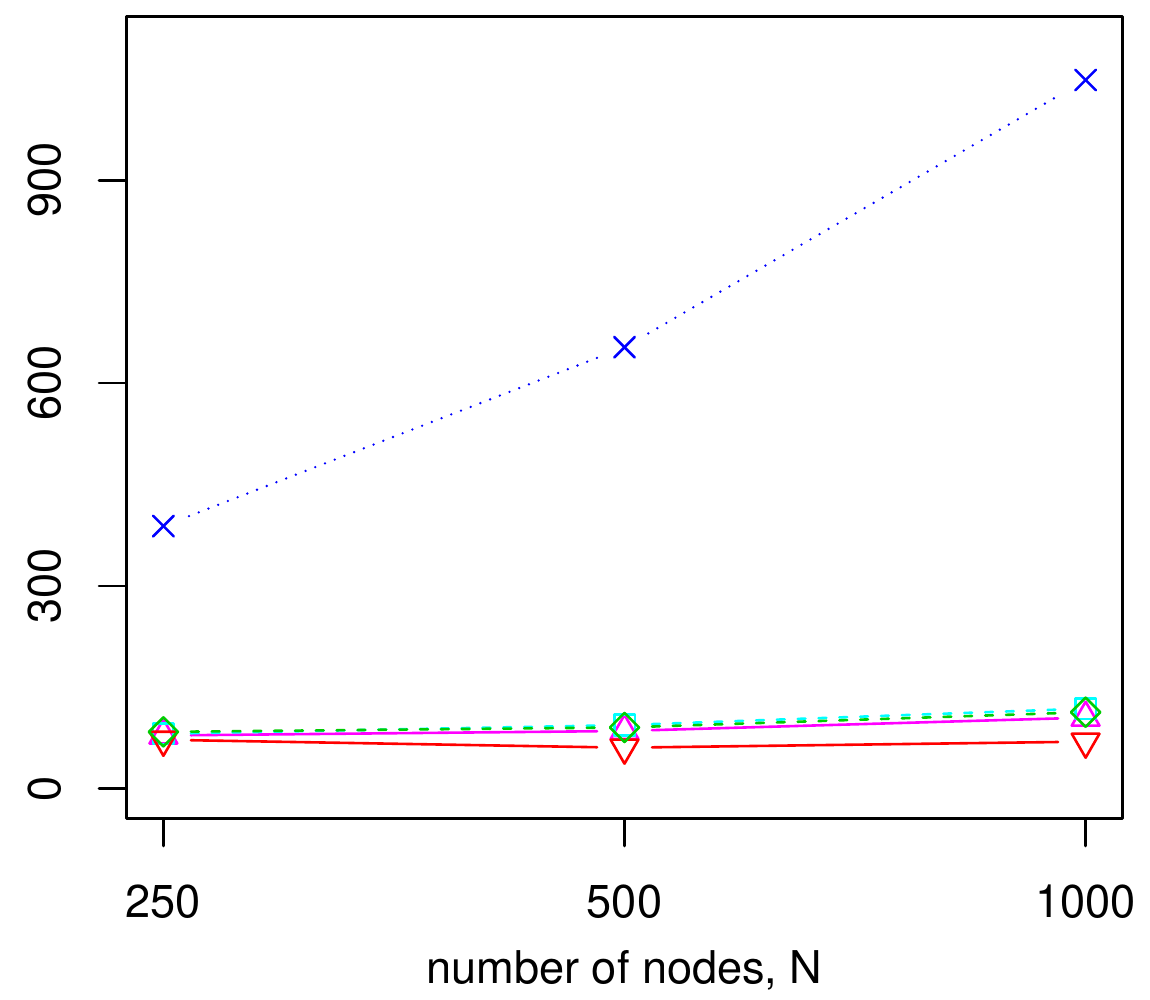}
		\\
		\vspace{0.2cm}
		\includegraphics[width=6.1cm]{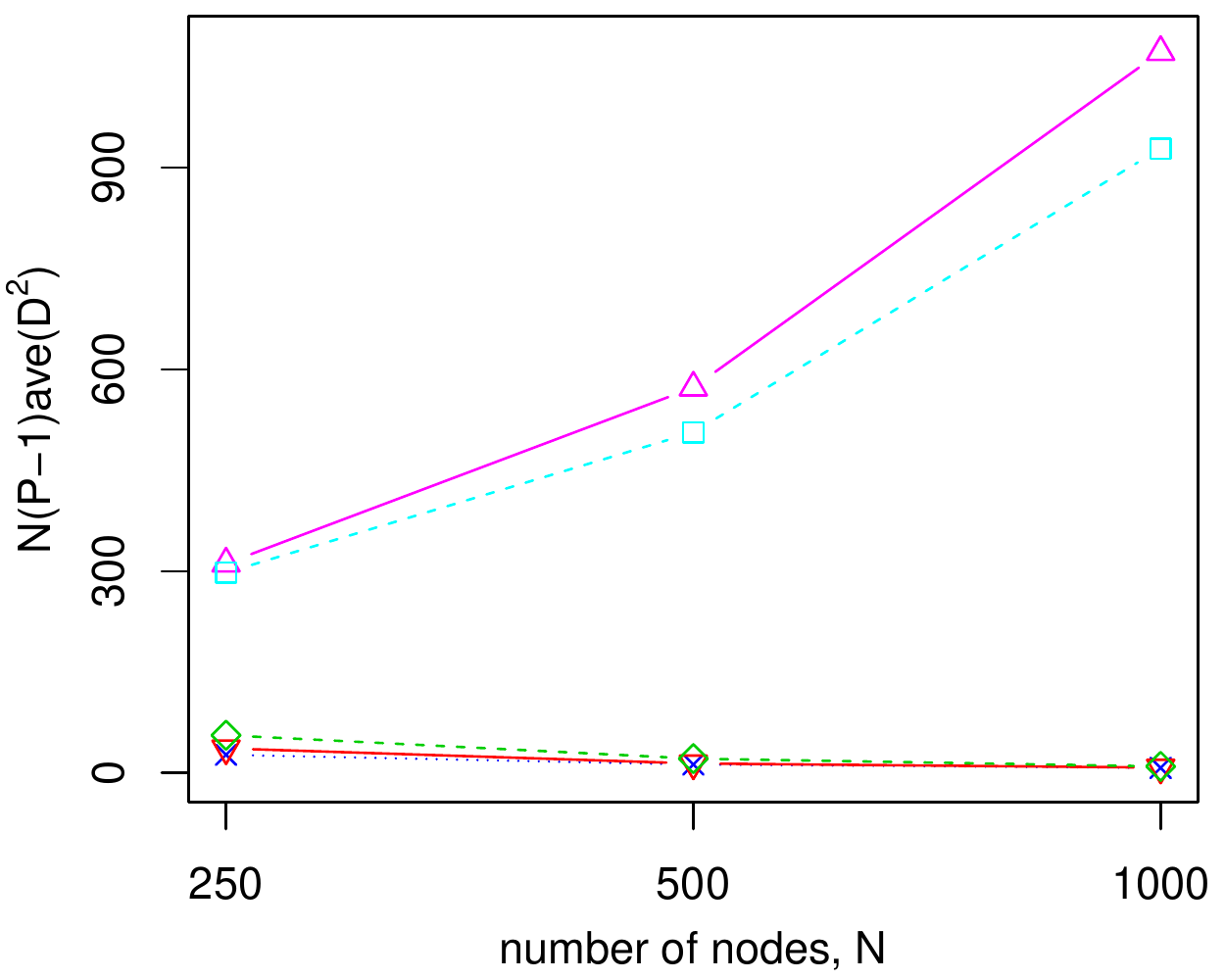}
		\hspace{1.1cm}
		\includegraphics[width=5.7cm]{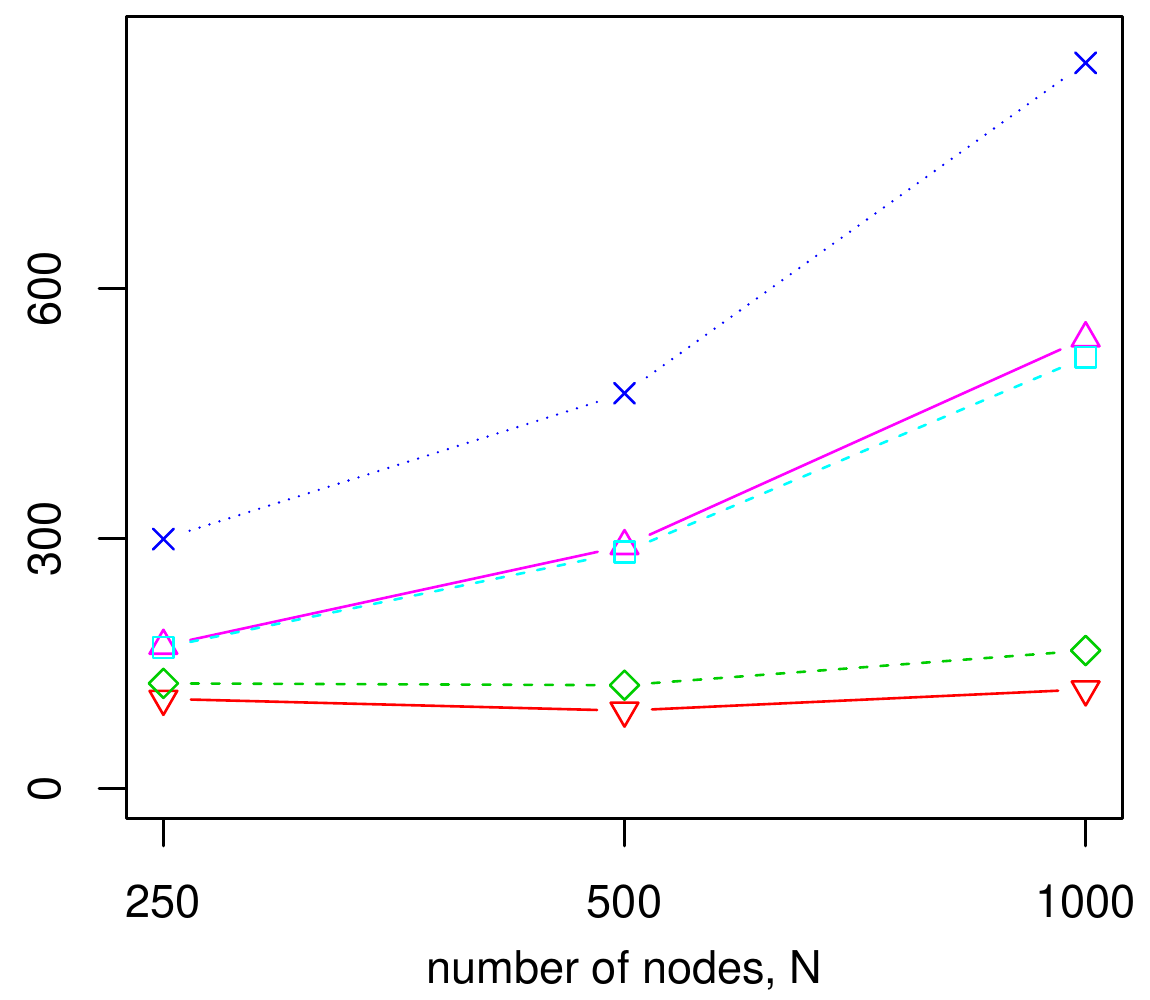} \\
		\includegraphics[width=17cm]{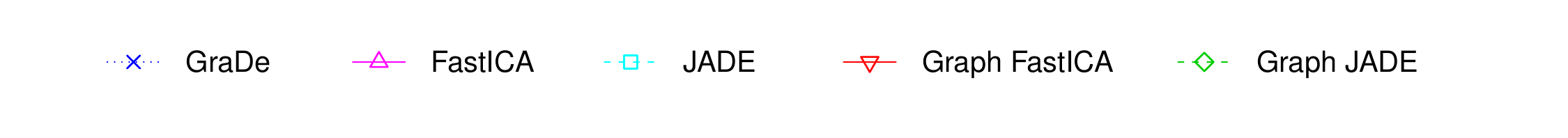}
	\end{minipage}
	\vspace{-1.0cm}
	\caption{The average values of $N(P-1)D^2(\hat{\mathbf{\Gamma}})$ over 1000 repetitions in (M1) top left, (M2) top right, (M3) bottom left), and model 4 (bottom right).}
	\label{fig3}
\end{figure*}

In the three models where the ICs have different adjacency matrices, the ML method is always on average more accurate than GraDe and obviously the oracle version of the ML method outperforms the realistic one. When ICs have the same adjacency matrix in (iv) and $\theta_1 \approx \theta_2$, GraDe is slightly better than the ML estimator. However, then neither of them performs better than random guess estimator. Notice that the CRB bound goes to infinity when $\theta_1 = \theta_2$. On the other hand, in model (iv) when $\theta_2 > \theta_1$, the difference between GraDe and the ML estimator is larger than in models (i)--(iii). The shapes of the CRB curve and the corresponding estimator performance curves are similar. Perhaps the most notable exception is $\text{tr} (\textbf{CRB}_{\Imat_2})$ in model (i) where the estimator curves trend upwards as $\theta_2$ increases whereas the CRB is almost constant for $\theta_2 > 0.18$. Meanwhile, the curves related to $(\textbf{CRB}_{\Imat_2})_{2,2} + (\textbf{CRB}_{\Imat_2})_{3,3}$ have almost identical shapes, so the difference is in the diagonal elements of $\mathbf{\Omega}$, especially in $(\mathbf{\Omega})_{2,2}$ which corresponds to the second IC. Small peaks in the simulation curves cannot be fully explained by the finite number of simulation runs. Interestingly, zooming in shows peaks in CRB curves as well, however, in slightly different places.

It should be reminded that even though the ML estimator is more accurate in these models, GraDe is to be preferred because it does not assume any particular graph signal model and it is not computationally as limited as the ML estimator. However, the gap between the CRB and GraDe performance shows that there is room for improvement. Thus, searching for new graph BSS estimators might be worthwhile.  

\subsection{Graph FastICA and Graph JADE}
To compare Graph FastICA (with $\lambda=0.001$) and Graph JADE (with $\lambda=0.8$) to FastICA, JADE and GraDe, we run simulations for four models and for three graph sizes $N = 250, 500, 1000$. Each model has $P = 4$ ICs which all are GMA(1) signals. 

In model (M1), all ICs have the same adjacency matrix, but $\theta$ and the distribution of $y_i$ in GMA model differ. The components have coefficients $\theta = 0.02, 0.04, 0.06, 0.08$, and follow $t$-distribution with 5, 10 and 15 degrees of freedom, and Gaussian distribution, respectively. This model satisfies the assumptions which JADE, FastICA and GraDe require to consistently estimate the unmixing matrix. Therefore, reasonable performance is expected by all methods.  

In (M2), the ICs share one adjacency matrix, distributions are $t$-distribution with 5 degrees of freedom, uniform distribution, exponential distribution, and Gaussian distribution, and the GMA coefficients are $\theta = 0.05, 0.06, 0.07, 0.08$. The proximity of the GMA coefficients makes the separation difficult for GraDe whereas the distributions are easier to separate for JADE and FastICA. 

The ICs of (M3) have different adjacency matrices and $\theta = 0.05$ for all of them. The distribution of innovations for all ICs is $t_{15}$ which is close to Gaussian. Hence, this setup is challenging for FastICA and JADE, but easy for GraDe. 

(M4) is an example of a case where the unmixing matrix is not identifiable for GraDe, JADE or FastICA, but Graph JADE and Graph FastICA can separate all components. The ICs have identical adjacency matrix, the GMA coefficients are $\theta = 0.04, 0.04, 0.08, 0.08$ and the innovation distributions are $t_{15}$, Gaussian, uniform and Gaussian, respectively. Hence, FastICA and JADE cannot separate second IC from fourth IC, and GraDe cannot separate first IC from second IC or third IC from fourth IC. The results of the simulations are presented in Fig.~\ref{fig3}. 

The average value of $N(P-1)D(\hat{\mathbf{\Gamma}})^2$ in 1000 runs is plotted. In each run, a new adjacency matrix or a set of adjacency matrices is generated so that results on different values of $N$ would be comparable. Otherwise, the realization of the single adjacency matrix could have too large impact on the averages. 

In (M1), where FastICA, JADE and GraDe are almost equally good, Graph JADE and Graph FastICA still demonstrate the best overall performances. As expected, Graph JADE and Graph FastICA also outperform the other methods in (M4). In (M2), where FastICA and JADE are much better than GraDe, Graph JADE achieves the performance of FastICA and JADE, and Graph FastICA even performs better than those. Similarly in (M3), where GraDe outperforms the ICA methods, Graph FastICA and Graph JADE are equally good as GraDe when $N=1000$. When $N$ is smaller, there is a gap between those, but the proposed methods still clearly demonstrate the best overall performance across the models even with small $N$. Comparing Graph FastICA and Graph JADE, the former one is better in all models and for all values of $N$. In the first two models the difference can be partly explained by the difference in performance between the squared symmetric FastICA and JADE, which is larger than that of regular FastICA and JADE. In the third model JADE is slightly better than squared symmetric FastICA. There are two possible explanations why Graph FastICA is still better than Graph JADE. One is that $\lambda=0.001$ of Graph FastICA puts more weight on the graph autocorrelations than $\lambda=0.8$ of Graph JADE, and the other is that Graph FastICA has more efficient downweighting for the non-Gaussianity part when the ICs are nearly Gaussian. 

\begin{table*}[t]
	\centering
	\begin{tabular}{ccccccccccccc}
		\hline
		\vspace{-0.18cm}
		& & & & & & & & & & & &  \\
		& \multicolumn{3}{c}{(M1)} & \multicolumn{3}{c}{(M2)} & \multicolumn{3}{c}{(M3)} & \multicolumn{3}{c}{(M4)} \\ 
		& GF & GJ & GF/GJ & GF & GJ & GF/GJ & GF & GJ & GF/GJ & GF & GJ & GF/GJ
		 \\ 		
		\hline		
		\vspace{-0.18cm}		
        & & & & & & & & & & & &  \\
		$N = 250$ & 0.189 & 0.005 & 41.8 & 0.089 & 0.005 & 18.9 & 0.102 & 0.004 & 24.9 & 0.170 & 0.005 & 37.1 \\
		$N = 500$ & 0.298 & 0.014 & 21.4 & 0.062 & 0.013 & 4.63 & 0.127 & 0.014 & 8.98 & 0.202 & 0.014 & 14.7 \\
		$N = 1000$ & 0.577 & 0.053 & 10.8 & 0.106 & 0.056 & 1.87 & 0.246 & 0.055 & 4.44 & 0.449 & 0.054 & 8.37 \\ \hline
	\end{tabular}
	\caption{Average computation times in seconds for Graph FastICA (GF) and Graph JADE (GJ), and the ratio of the average computation times (GF/GJ).}
	\label{CompTime}
\end{table*}

The computation times of Graph JADE and Graph FastICA methods were recorded in (M1)--(M4) using RStudio Version 1.2.5033 with Quad 3.40 GHz Intel Xeon(R) E3--1230 v5 processor. Graph JADE was faster than Graph FastICA in each model, but the difference depends quite significantly on the model and the size of the graph, as can be seen in Table~\ref{CompTime}. The most plausible explanation for the change in ratios between different graph sizes is that Graph FastICA needs less iterations for large $N$, whereas the number of iterations remains more constant for Graph JADE. When the number of source components increases, it is expected that the Graph FastICA would become faster than Graph JADE, as is the case in comparison between regular FastICA and JADE.

\section{Discussion}
\label{sec:concl}
We derived the CRB for mixing and unmixing matrix estimates in the BSS model, where the ICs are Gaussian graph signals. As specific examples we considered the case of Gaussian GMA signals. We then found out that the only graph BSS estimator currently available, namely the GraDe method, is performing quite far from the bound. The bound however suggests that the mixing matrix could be estimated very accurately. Thus, we also developed and studied numerically an ML estimator in the case that the ICs are Gaussian GMA(1) signals. The performance of ML estimator was found to be in between the CRB and GraDe's performance. However, the ML estimator makes assumptions on the graph signal model which GraDe does not, and it is computationally limited/inefficient with respect to the size of the graph. Hence, interesting open questions are whether the CRB can be reached by any estimator, and thus, is there a practical method which is more efficient than GraDe in the Gaussian graph signal BSS problem.

We have also introduced two BSS methods for non-Gaussian graph signals, which use both non-Gaussianity and graph dependence of the ICs. The proposed Graph JADE uses joint approximate diagonalization of graph autocorrelation and fourth-order cumulant matrices and the proposed Graph FastICA uses fixed-point algorithm to maximize an objective function which is composed of the sum of squares of the diagonal elements of the graph autocorrelation matrices and a user-specified non-Gaussianity measure. For sufficiently large graphs, these methods outperform JADE, FastICA, and GraDe when the sources are non-Gaussian graph signals, and meet the performance of the best method with minimal losses when the sources exhibit only non-Gaussianity or only graph dependence.


\appendices

\section{Proof of Theorem \ref{Th_FIM}}\label{App_Th_FIM}
In this appendix, we derive the FIM under the model \eqref{vec_BSS} and parameter vector \eqref{parameter_vector}. According to the Slepian-Bangs formula \cite{SLEPIAN,BANGS} the $(i,j)$th element of the FIM is given by
\be\label{slepian_bangs_cov}
J_{i,j}=\frac{1}{2}{\text{tr}}\left(\Cmat_\xvec^{-1}\frac{\partial\Cmat_\xvec}{\partial\phi_i}\Cmat_\xvec^{-1}\frac{\partial\Cmat_\xvec}{\partial\phi_j}\right), \quad i,j=1,\ldots,P^2 + M.
\ee

Using Kronecker product rules \cite[pp. 60]{MATRIX_COOKBOOK}, we obtain the inverse of $\Cmat_\xvec$, which is given by
\begin{equation*}
\Cmat_\xvec^{-1}=(\Imat_N\otimes\Omegamat^{-\top})\Cmat_\zvec^{-1}(\Imat_N\otimes\Omegamat^{-1})
\end{equation*}
where
\be\label{C_z_inv}
\Cmat_\zvec^{-1}=\sum_{p=1}^{P}\Cmat_p^{-1}\otimes(\evec_p\evec_p^\top).
\ee

It can be observed that $\phi_{P(l-1) + k}=\omega_{k,l}, \quad k,l=1,\ldots,P$. Then we have
\be\label{C_x_deriv_Omega}
\begin{split}
&\frac{\partial \Cmat_\xvec}{\partial\phi_{P(l-1) + k}} = \frac{\partial \Cmat_\xvec}{\partial \omega_{k,l}} = (\Imat_N \otimes \Omegamat) \Cmat_\zvec (\Imat_N \otimes \evec_l \evec_k^\top) + (\Imat_N \otimes \evec_k \evec_l^\top ) \Cmat_\zvec (\Imat_N \otimes \Omegamat^\top), \\
& \qquad k,l =1,\ldots,P
\end{split}
\ee

In addition, for $\phi_{P^2 + \sum_{q=1}^{p-1}M_q + m}=\theta_{p,m}$, we have
\be\label{C_x_deriv_theta}
\begin{split}
\frac{\partial \Cmat_\xvec}{\partial\phi_{P^2 + \sum_{q=1}^{p-1}M_q + m}} &= \frac{\partial \Cmat_\xvec}{\partial \theta_{p,m}} = (\Imat_N \otimes \Omegamat) (\Dmat_{p,m} \otimes \evec_p \evec_p^\top) (\Imat_N \otimes \Omegamat^\top), \\
& \!\!\!\!\!\!\!\!\!\!\!\!\!\!\!\!\!\!\!\!\!\!\!\!\!\!\!\!\!\! m=1,\ldots,M_p,~p =1,\ldots,P,~
\end{split}
\ee
where $\Dmat_{p,m}$ is given in \eqref{C_m_deriv}. The second equality in \eqref{C_x_deriv_Omega} as well as in \eqref{C_x_deriv_theta} is obtained using matrix derivative rules \cite{MATRIX_COOKBOOK}. 

Substituting \eqref{C_x_deriv_Omega} with $\omega_{i,j}$ and $\omega_{k,l}$ into \eqref{slepian_bangs_cov}, we obtain
\be\label{FIM_compute_Omega}
\begin{split}
&J_{P(j-1) + i,P(l-1) + k} = N{\text{tr}} \left( \evec_i \evec_j^\top \Omegamat^{-1} \evec_k \evec_l^\top \Omegamat^{-1} \right) \\
&+{\text{tr}} \left( \Cmat_\zvec (\Imat_N \otimes (\evec_j \evec_i^\top \Omegamat^{-\top})) \Cmat_\zvec^{-1} (\Imat_N \otimes (\Omegamat^{-1} \evec_k \evec_l^\top))\right) .
\end{split}
\ee

Using \eqref{FIM_compute_Omega}, after some algebraic computations, we can find the $(j,l)$th block in $\Jmat_\Omegamat$ which corresponds to the $j$th and $l$th columns of $\Omegamat$. This block is given by
\be\label{FIM_compute_Omega_next}
\begin{split}
\Jmat_\Omegamat^{(j,l)}&=\Omegamat^{-\top} {\rm{E}} \{(\zvec_j^\top \otimes \Imat_P) \Cmat_\zvec^{-1} (\zvec_l \otimes \Imat_P)\} \Omegamat^{-1} + \Omegamat^{-\top} N \evec_l \evec_j^\top \Omegamat^{-1}, \quad j,l=1,\ldots,P.
\end{split}
\ee

From the statistical independence of the zero-mean ICs, $\zvec_p,~p=1,\ldots,P$, and using the expression for $\Cmat_\zvec^{-1}$ from \eqref{C_z_inv}, we can write
\be\label{FIM_compute_z_part}
\begin{split}
&{\rm{E}}\{(\zvec_j^\top \otimes \Imat_P) \Cmat_\zvec^{-1} (\zvec_l \otimes \Imat_P) \} = \delta(j-l) \sum_{p=1}^{P} \kappa_{j,p} \evec_p \evec_p^\top =\delta(j-l) {\text{diag}} (\kappa_{j,1}, \ldots, \kappa_{j,P}), \\
& \qquad j,l =1 ,\ldots P
\end{split}
\ee
where $\kappa_{j,p}$ is defined in \eqref{kappa_define}. 

Substituting \eqref{FIM_compute_z_part} into \eqref{FIM_compute_Omega_next} yields
\be\label{FIM_compute_Omega_sub}
\begin{split}
&\Jmat_\Omegamat^{(j,l)} = \Omegamat^{-\top} (\delta(j-l) {\text{diag}} (\kappa_{j,1}, \ldots, \kappa_{j,P}) + N \evec_l \evec_j^\top) \Omegamat^{-1}, \quad j,l=1,\ldots,P .
\end{split}
\ee
Rewriting \eqref{FIM_compute_Omega_sub} and replacing $(j,l)$ with $(i,j)$, we obtain \eqref{FIM_compute_Omega_block}.\\
\indent

Substituting \eqref{C_x_deriv_theta} with $\theta_{p,i}$ and $\theta_{q,j}$ into \eqref{slepian_bangs_cov} and using trace and Kronecker product properties, we obtain \eqref{FIM_compute_theta_sub}. Similarly, substituting \eqref{C_x_deriv_Omega} and \eqref{C_x_deriv_theta} with $\omega_{i,j}$ and $\theta_{p,m}$, respectively, into \eqref{slepian_bangs_cov} and using trace and Kronecker product properties, we also obtain
\be\label{FIM_compute_Omega_theta_element}
\begin{split}
&j_{P(j-1) + i,P^2 + \sum_{q=1}^{p-1}M_q + m} = \delta(p-j) [\Omegamat^{-1}]_{p,i} {\text{tr}} (\Cmat_p^{-1} \Dmat_{p,m}), \quad  m=1,\ldots,M_p,~i,j,p=1,\ldots,P ,
\end{split}
\ee
where $[\Omegamat^{-1}]_{p,i}$ is the $(p,i)$th element of $\Omegamat^{-1}$ or equivalently, the $(i,p)$th element of $\Omegamat^{-\top}$. Using \eqref{FIM_compute_Omega_theta_element} after some algebraic computations, we get that the $P\times M_p$ block in the FIM corresponding to the $j$th column of $\Omegamat$ and to $\thetavec_p$ is given by $\delta(p-j)\Omegamat^{-\top}\evec_p\svec_p^\top,~j,p=1,\ldots,P$. Consequently, we obtain that $\Jmat_{\Omegamat,\thetavecsmall}$ is a block diagonal matrix, whose $p$th diagonal block is a $P\times M_p$ matrix given by \eqref{FIM_compute_Omega_theta_block}, $p = 1, \ldots, P$. Thus, the $(P^2 + M)\times(P^2 + M)$ FIM is derived. 

\section{Proof of Proposition \ref{prop_CRB_omega}}\label{App_prop_CRB_omega}
In this appendix, we derive the CRB for estimation of $\Omegamat$. The block matrix inversion formula \cite{MATRIX_COOKBOOK} gives
\begin{equation*}
{\textbf{CRB}}_\Omegamat=(\Jmat_\Omegamat-\Jmat_{\Omegamat,\thetavecsmall}\Jmat_\thetavecsmall^{-1}\Jmat_{\Omegamat,\thetavecsmall}^\top)^{-1}.
\end{equation*}
Substituting \eqref{FIM_compute_theta_sub} and \eqref{FIM_compute_Omega_theta_block} into the term $\Jmat_{\Omegamat, \thetavecsmall} \Jmat_\thetavecsmall^{-1} \Jmat_{\Omegamat, \thetavecsmall}^\top$, we obtain a block diagonal matrix, whose $p$th diagonal block is
\be\label{first_term}
\begin{split}
&(\Jmat_{\Omegamat, \thetavecsmall} \Jmat_\thetavecsmall^{-1} \Jmat_{\Omegamat, \thetavecsmall}^\top)^{(p)} = \Omegamat^{-\top}(\svec_p^\top\Jmat_{\thetavecsmall_p}^{-1}\svec_p)\evec_p \evec_p^\top \Omegamat^{-1}, \quad p=1,\ldots,P.
\end{split}
\ee
Then, substituting \eqref{FIM_compute_Omega_block} and \eqref{first_term} into the term $\Jmat_\Omegamat - \Jmat_{\Omegamat, \thetavecsmall} \Jmat_\thetavecsmall^{-1} \Jmat_{\Omegamat, \thetavecsmall}^\top$, we obtain $\textbf{CRB}_{\Omegamat}^{-1}$, whose $(i,j)$th block is given by
\begin{equation*}
\begin{split}
&(\textbf{CRB}_{\Omegamat}^{-1})^{(i,j)} = \begin{cases}
\Omegamat^{-\top}(\zeta_i \evec_i \evec_i^\top + \sum_{\underset{l \neq i}{l=1}}^{P} \kappa_{i,l} \evec_l \evec_l^\top) \Omegamat^{-1}, \quad i=j\\
\Omegamat^{-T}N\evec_j\evec_i^\top\Omegamat^{-1}, \quad i\neq j \\
\end{cases}, \quad i,j = 1, \ldots, P
\end{split}
\end{equation*}
where $\zeta_i$ is defined in \eqref{zeta_general}.\\
\indent
Using a matrix version of Cauchy--Schwarz inequality \cite[Eq. (2.3)]{PECARIC} with arbitrary matrices, $\Amat_1$ and $\Amat_2$, we can write
\be\label{CS_general}
\Amat_1\Amat_1^\top\succeq\Amat_1\Amat_2^\top(\Amat_2\Amat_2^\top)^{-1}\Amat_2\Amat_1^\top
\ee
Substituting $\Amat_1 = {\text{vec}}^\top(\Imat_N)$ and $\Amat_2^\top$ to be a matrix, whose $m$th column is $(\Cmat_p^{-\frac{1}{2}} \otimes \Cmat_p^{-\frac{1}{2}}) {\text{vec}}(\Dmat_{p,m}),$ $m=1,\ldots,M_p$, in \eqref{CS_general} and applying trace, vectorization, and Kronecker product rules, we can show that
\be
2N \geq \svec_p^\top\Jmat_{\thetavecsmall_p}^{-1}\svec_p\quad p=1,\ldots,P. \notag
\ee
Hence, $\zeta_p\geq 0,~\forall p=1,\ldots,P$. To invert $\textbf{CRB}_{\Omegamat}^{-1}$, some algebraic manipulations and Kronecker product rules are applied, in a similar manner to the compact CRB derivation in \cite{ESA_CRB} and to the CRB derivation in \cite{YEREDOR_GAUSSIAN_CRB}. Finally, the $(i,j)$th block of $\textbf{CRB}_{\Omegamat}$ is obtained as given in \eqref{CRB_compute_Omega_block}.

\section{Proof of Corollary \ref{coro_CRB_gamma}}\label{App_coro_CRB_gamma}
In this appendix, we derive the CRB for estimation of the unmixing matrix $\Gammamat$. To derive $\textbf{CRB}_{\Gammamat}$, we use the CRB formula for estimating a function $\gvec(\phivec)={\text{vec}}(\Omegamat^{-1})$. Generally, the CRB for unbiased estimation of $\gvec(\phivec)$ is given by \cite{KAY,TTB}
\be\label{CRB_transformation}
{\textbf{CRB}}_{\gvec(\phivec)} = \frac{\ud \gvec( \phivec)}{\ud \phivec}{\textbf{CRB}}_{\phivec} \frac{\ud \gvec(\phivec)}{\ud \phivec}^\top
\ee
where the $(i,j)$th element of the matrix, $\frac{\ud\gvec(\phivecsmall)} {\ud \phivecsmall}$, is $\left [\frac{\ud\gvec(\phivecsmall)} {\ud \phivecsmall} \right]_{i,j} = \frac{\partial g_i (\phivecsmall)}{\partial \phi_{j}}$. 

Using matrix derivative rules \cite{MATRIX_COOKBOOK}, we obtain
\be\label{vec_inv_Omega_element_deriv}
\frac{\partial}{\partial\omega_{i,j}} {\text{vec}} (\Omegamat^{-1}) = -{\text{vec}} (\Omegamat^{-1} \evec_i \evec_j^\top \Omegamat^{-1}).
\ee
Moreover, using \eqref{vec_inv_Omega_element_deriv} and Kronecker product rules, we have
\be\label{vec_inv_Omega_mat_deriv}
\frac{\ud {\text{vec}} (\Omegamat^{-1})}{\ud {\text{vec}} (\Omegamat)} = -\Omegamat^{-\top} \otimes \Omegamat^{-1} = -(\Omegamat^{-\top} \otimes \Imat_P) (\Imat_P \otimes \Omegamat^{-1}).
\ee

Substituting $\gvec(\phivec)={\text{vec}}(\Omegamat^{-1})$ and \eqref{vec_inv_Omega_mat_deriv} into the term $\frac{\ud \gvec(\phivec)} {\ud\phivec}$, we obtain
\be\label{g_deriv}
\frac{\ud \gvec(\phivec)}{\ud \phivec} = -[(\Omegamat^{-\top} \otimes \Imat_P)(\Imat_P \otimes \Omegamat^{-1}),\zerovec].
\ee
Finally, substitution of \eqref{equivariance_Omega} and \eqref{g_deriv} into \eqref{CRB_transformation} yields \eqref{CRB_Gammamat}.


\label{sec:ref}

\end{document}